\documentclass[preprint,12pt]{elsarticle}

\usepackage{amssymb}
\usepackage{amsmath}
\usepackage{amsthm}
\usepackage{indentfirst}
\usepackage{subcaption}
\usepackage{latexsym,paralist,graphicx}
\usepackage{enumerate}
\usepackage{multirow}
\usepackage{comment}
\usepackage{algorithm}
\usepackage{algorithmic}
\usepackage{appendix}
\usepackage{xcolor}
\usepackage{graphicx}
\usepackage{epstopdf}
\usepackage{bm}
\usepackage[
    colorlinks=true, 
    allcolors=black, 
    pdfborder={0 0 0}
]{hyperref}
\usepackage[normalem]{ulem}
\usepackage{bbm}
\usepackage{mathrsfs}
\usepackage{booktabs}
\usepackage{threeparttable}
\usepackage{slashed}
\newtheorem{theorem}{Theorem}[section]
\newtheorem{lemma}{Lemma}[section]

\newtheorem{proposition}{Proposition}[section]

\newtheorem{remark}{Remark}[section]
\newtheorem{definition}{Definition}[section]
\makeatletter
\renewenvironment{proof}[1][\proofname]{\par
  \pushQED{\qed}%
  \normalfont \topsep6\p@\@plus6\p@\relax
  \trivlist
  \item[\hskip\labelsep
        \normalfont\bfseries #1\@addpunct{.}]\ignorespaces
}{%
  \popQED\endtrivlist\@endpefalse
}
\makeatother
\journal{Journal of Computational and Applied Mathematic}

\begin{document}

\begin{frontmatter}

%% Title, authors and addresses

%% use the tnoteref command within \title for footnotes;
%% use the tnotetext command for theassociated footnote;
%% use the fnref command within \author or \affiliation for footnotes;
%% use the fntext command for theassociated footnote;
%% use the corref command within \author for corresponding author footnotes;
%% use the cortext command for theassociated footnote;
%% use the ead command for the email address,
%% and the form \ead[url] for the home page:
%% \title{Title\tnoteref{label1}}
%% \tnotetext[label1]{}
%% \author{Name\corref{cor1}\fnref{label2}}
%% \ead{email address}
%% \ead[url]{home page}
%% \fntext[label2]{}
%% \cortext[cor1]{}
%% \affiliation{organization={},
%%             addressline={},
%%             city={},
%%             postcode={},
%%             state={},
%%             country={}}
%% \fntext[label3]{}

\title{Option pricing model under the G-expectation framework}

%% use optional labels to link authors explicitly to addresses:
%% \author[label1,label2]{}
%% \affiliation[label1]{organization={},
%%             addressline={},
%%             city={},
%%             postcode={},
%%             state={},
%%             country={}}
%%
%% \affiliation[label2]{organization={},
%%             addressline={},
%%             city={},
%%             postcode={},
%%             state={},
%%             country={}}

\author[1]{Ziting Pei} %% Author name
\author[2,3]{Xingye Yue}
\author[2]{Xiaotao Zheng\corref{cor1}}
\ead{20234013002@stu.suda.edu.cn}
\cortext[cor1]{Corresponding author}
%% Author affiliation
\affiliation[1]{organization={School of Business},%Department and Organization
            addressline={Suzhou University of Science and Technology},
            city={Suzhou},
            postcode={215006},
            state={Jiangsu},
            country={China}}
\affiliation[2]{organization={Center for Financial Engineering},%Department and Organization
            addressline={Soochow University},
            city={Suzhou},
            postcode={215006},
            state={Jiangsu},
            country={China}}
\affiliation[3]{organization={School of Mathematical Sciences},%Department and Organization
            addressline={Soochow University},
            city={Suzhou},
            postcode={215006},
            state={Jiangsu},
            country={China}}
%% Abstract
\begin{abstract}
G-expectation, as a sublinear expectation, provides a powerful framework for modeling uncertainty in financial markets. Motivated by the need for robust valuation under model uncertainty, this work develops a unified risk‑neutral valuation approach within the G‑expectation environment, yielding a nonlinear generalization of the Black–Scholes model, termed the G‑Black–Scholes equation. To enhance computational efficiency and reduce numerical cost, we introduce a logarithmic transformation of the asset price, which yields an alternative nonlinear PDE. Based on this transformed formulation, we design both explicit and implicit finite difference schemes that are rigorously demonstrated to be consistent, stable, monotone, and convergent to the viscosity solution. Numerical examples further confirm the accuracy and computational advantages of the proposed methods. 
Numerical examples further verify that the proposed schemes achieve high accuracy and that the logarithmic transformation relax the stability constraints of explicit schemes and improves computational efficiency.

% Numerical experiments on benchmark options, including butterfly spreads and digital options, further confirm the accuracy and computational benefits of the proposed methods.
\end{abstract}

%%Graphical abstract
% \begin{graphicalabstract}
% %\includegraphics{grabs}
% \end{graphicalabstract}

%%Research highlights
\begin{highlights}
\item A comprehensive risk‑neutral option valuation framework is formulated within the G‑expectation setting, in which the underlying follows a G‑geometric Brownian motion. The resulting pricing problem is governed by a fully nonlinear partial differential equation.
\item Both explicit and implicit finite‑difference discretizations are constructed for the logarithmically transformed equation. Their consistency, stability, monotonicity, and convergence towards the viscosity solution are rigorously established.
\item Theoretical analysis and numerical experiments demonstrate that the logarithmic transformation improves efficiency and reduces computational cost.
\end{highlights}

%% Keywords
\begin{keyword}
G-expectation\sep G-geometric Brownian motion\sep risk-neutral option pricing\sep explicit method\sep implicit method\sep convergence
\end{keyword}
\end{frontmatter}

%% Add \usepackage{lineno} before \begin{document} and uncomment
%% following line to enable line numbers
%% \linenumbers

%% main text
%%
\section{Introduction}
\label{sec1}
%Derivative pricing is an important issue in financial theory and practice. The research on how to price options has a long history in finance, which can be traced back to the work of Bachelier in 1900. For a long time afterwards, the development of financial pricing theory was slow. It was not until 1973 that Black and Scholes \cite{BS} made significant contributions to option pricing theory. 

Derivative pricing is an important issue in financial theory and practice. The research on how to price options has a long history in finance, which can be traced back to the work of Bachelier in 1900. The evolution of derivative pricing models remained gradual until the breakthrough contributions of Black and Scholes \cite{BS} in 1973, which revolutionized option pricing methodology.
Under the no-arbitrage assumption, they used the dynamic replication method to obtain the explicit solution for European call options on stocks that do not pay dividends.
Subsequently, in 1997, it was honored with the Nobel Prize in Economics.

The Black-Scholes pricing formula has been widely recognized by the academic and industrial communities for its elegant form and simple calculation. However, this pricing theory itself involves some assumptions that do not align with practices. For example, one of the important assumptions of the Black-Scholes pricing theory is that stock prices adhere to a geometric Brownian motion, with volatility remaining constant. However, practical studies have shown that volatility is not constant and is more likely to be uncertain. More complex models postulate volatility surfaces that span different underlying asset prices and time periods (refer to, for example, Andersen and Brothertonratcliffe \cite{AB}; Coleman et al. \cite{Co}). Market practitioners typically construct these surfaces by applying implied volatilities calculated via the Black-Scholes model across a broad spectrum of actively traded instruments. An alternative methodology employs stochastic volatility, which posits that volatility itself follows a random walk pattern, as initially proposed by Heston \cite{He}. Stochastic volatility complicates numerical pricing methods by increasing the number of state variables required.

Lyons \cite{Ly} and Avellaneda et al. \cite{Ave} independently proposed the uncertain volatility framework. In this approach, it is assumed that the volatility of the underlying asset lies within a specified range of values.
As a result, the prices derived from a no-arbitrage analysis are no longer singular. The only calculable elements are the best-case and worst-case prices for a particular long or short position. When an investor takes the worst-case scenario into account, they can protect their position. Even if the actual volatility varies, as long as it remains within the defined range, the investor can ensure a balance that is not negative within the hedging portfolio.  
%The pricing equations derived from uncertain volatility models in finance are often cast in the form of nonlinear partial differential equations(PDEs). 
By the Feynman–Kac formula, the mathematical formulation of the uncertain volatility model naturally leads to a nonlinear partial differential equation (PDE).
For the single-factor uncertain volatility model, Pooley et al. \cite{Po} developed numerical algorithms and discussed their convergence properties.
For the two-factor case, Ma and Forsyth \cite{MF} proposed a fully implicit finite difference scheme that is unconditionally monotone.

Peng \cite{Peng1} proposed the nonlinear G-expectation and presented concepts such as the G-Brownian motion. He established the stochastic integral using G-Brownian motion and obtained the corresponding G-Itô formula. The central limit theorem in the G-framework derived by Peng \cite{Peng2} provides a theoretical basis for describing the price changes of marketable securities with
the G-geometric Brownian motion. The G-geometric Brownian motion can be used to characterize the part of stochastic uncertainty.

This paper studies the risk-neutral pricing problem of options under the G-expectation. In this framework, the G-expected returns of both the underlying assets and the options are the risk-free interest rates. The price $S_{t}$ of the underlying asset is described by the G-geometric Brownian motion, and the price of the option corresponds to a G-expectation. Using the nonlinear Feynman-Kac formula, the option pricing problem is transformed into solving a fully nonlinear PDE. 
Furthermore, to improve numerical performance, a logarithmic transformation is applied to the price of the underlying asset, and $X_{t}=\mathrm{ln}S_{t}$ is introduced. 
This transformation significantly relaxes the stability  constraint and reduces the computational cost of the explicit scheme.
Based on the equation of $X_{t}$, this paper presents both explicit and implicit numerical schemes for solving the fully nonlinear PDE respectively. The consistency, stability, and monotonicity of the proposed schemes are rigorously established, thereby guaranteeing convergence to the viscosity solution of the nonlinear PDE. Numerical examples further confirm the accuracy and computational advantages of the proposed methods.

The remainder of this paper is organized as follows. Section \ref{sec:3} establishes the risk-neutral pricing framework under G-expectation, derives the G-Black-Scholes equation, and introduces a logarithmic transformation to obtain an equivalent nonlinear PDE with improved computational properties. Section \ref{sec:4} presents explicit and implicit finite difference schemes for solving the transformed PDE. Section \ref{sec:5} provides rigorous theoretical analysis, proving consistency, stability, monotonicity, and convergence to the viscosity solution, with convergence rate estimates and proof of convergence for the implicit scheme's nonlinear iteration. Section \ref{sec:6} presents numerical experiments confirming the accuracy and computational advantages of the proposed methods. Section \ref{sec:7} concludes the paper.

\section{Risk-neutral option pricing under G-expectation} {\label{sec:3}}
In this section, we establish a risk-neutral option pricing framework under G-expectation. We begin by defining risk neutrality in the nonlinear expectation setting, then construct a G-option pricing model and derive the corresponding G-Black-Scholes equation. Finally, we introduce a logarithmic transformation to obtain an alternative formulation that improves computational efficiency.

Consider a financial market consisting of three fundamental instruments: risk-free bonds, underlying assets (such as stocks), and derivatives (such as options) written on these assets. The risk-free bond grows at the constant rate $r > 0$, while the underlying asset and derivatives are subject to market uncertainty.
In the classical linear probability framework, risk neutrality is a well-established concept where all securities in the market earn the risk-free rate of interest $r$ under an appropriately chosen risk-neutral probability measure. This fundamental principle forms the cornerstone of modern derivative pricing theory, ensuring that arbitrage opportunities are eliminated and enabling the unique valuation of contingent claims.

However, in the nonlinear expectation framework, there exists no concept of probability measure, only the nonlinear expectation operator $\mathbb{E}$. Therefore, we extend the notion of risk neutrality to the G-expectation setting as follows:

\begin{definition}[Risk-neutral pricing under G-expectation]
\label{def:risk_neutral_G}
In a G-expectation framework where the market is risk-neutral, the G-expected return on all risky securities equals the risk-free rate of interest $r$. For an underlying asset price process $S_t$ and a derivative process $U(t,S_t)$ in the G-expectation space $(\Omega, \mathcal{H}, \mathbb{E})$, this implies
\begin{align}
\mathbb{E}\left[\frac{S_T}{\Gamma_T} \right] = \frac{S_t}{\Gamma_t}, \quad 
\mathbb{E}\left[\frac{U(T, S_T)}{\Gamma_T} \right] = \frac{U(t, S_t)}{\Gamma_t},
\end{align}
where $\Gamma_t = e^{rt}$ is the risk-free bond. Consequently, under this risk-neutral market, the fair value of the derivative at time $t = 0$ with initial asset price $S_0$ is given by
\begin{align}
U(0, S_0) = e^{-rT} \mathbb{E}[U(T,S_T)].
\end{align}

\end{definition}

Previous literature has explored option pricing in incomplete markets under the G-expectation framework through hedging strategies \cite{Denis_06, Xu_2021, Xu_2023}.
Building upon this foundation, we now formulate a complete risk-neutral pricing framework under the G-expectation.

A fundamental prerequisite for option pricing is establishing the risk-neutral dynamics of the underlying asset. Consider a stock price process $S(t)$ driven by G-geometric Brownian motion in a G-expectation space $(\Omega, \mathcal{H}, \mathbb{E})$ with a symmetric G-Brownian motion $B_t$:
\begin{equation} 
\label{G_St}
S_t = S_0 e^{rt - \frac{\sigma^2}{2}\langle B \rangle_t + \sigma B_t},
\end{equation}
where $S_0 > 0$ is the initial stock price, $r \geq 0$ is the risk-free interest rate, $\sigma > 0$ is the volatility parameter, $\langle B \rangle_t$ denotes the G-quadratic variation of $B_t$, and $B_t \sim \mathcal{N}(0,[\underline{\Sigma}^2,\overline{\Sigma}^2])$.
By Lemma 3.3.4 and Lemma 3.6.1 of \cite{Peng2}, it can be shown that the stock price process $S(t)$ is risk-neutral under G-expectation and satisfies:
\begin{equation} 
\label{eq:dSt}
\left\{ \begin{array}{l}
dS_t = rS_t dt + \sigma S_t dB_t, \\
S_0 = s_0,
\end{array} \right.
\end{equation}
and
\begin{equation} 
\mathbb{E}[S(t)]=S_0e^{rt}.
\end{equation}
Given the risk-neutral stochastic process $S_t$ \eqref{eq:dSt} under the G-expectation framework, we define the value function $U(0,s_0)$ of the G-option as:
\begin{equation} \label{value_function}
	\begin{aligned}
	U\left( 0,s_0\right) &=\mathbb{E}\left[ \left. e^{-r\left( T-t\right) }\phi
	\left( S_{T}\right) \right\vert S_{0}=s_0\right] \\
        &=e^{-r\left( T-t\right) } \mathbb{E}\left[ \left. \phi
	\left( S_{T}\right) \right\vert S_{0}=s_0\right],
\end{aligned}
\end{equation}
where $\phi \left( S_{T}\right) $ is the payoff function.

\subsection{G-Black-Scholes equation}

By the nonlinear Feynman-Kac formula of \cite{Peng2}, the value function
$U(t,S)$ satisfies the following fully nonlinear PDE:
\begin{equation} \label{G_S_option_eq}
\left\{
\begin{aligned}
&\frac{\partial U}{\partial t} + rS \frac{\partial U}{\partial S}
+ \frac{1}{2} \sup_{\Sigma \in [\underline{\Sigma}, \overline{\Sigma}]} \left(
\sigma^2 S^2 \Sigma^2 \frac{\partial^2 U}{\partial S^2} \right)
- rU = 0, \quad 0 \leq t \leq T, \\
&U(t, S)\big|_{t = T} = \phi(S).
\end{aligned}
\right.
\end{equation}

\begin{remark}
When $\phi(S) = \max(S-K,0)$ (convex) or $\phi(S) = \max(K-S,0)$ (concave), the equation reduces to the classical Black-Scholes equation. Specifically, for a call option with convex payoff $\phi(S) = \max(S-K,0)$, equation \eqref{G_S_option_eq} becomes:
\begin{equation}  \label{App_eq:BS_SUP}
\left\{
\begin{aligned}
&\frac{\partial U}{\partial t} + rS \frac{\partial U}{\partial S}
+ \frac{1}{2} \left(
\sigma^2 S^2 \overline{\Sigma}^2 \frac{\partial^2 U}{\partial S^2} \right)
- rU = 0, \quad 0 \leq t \leq T, \\
&U(t, S)\big|_{t = T} = max(S-K,0),
\end{aligned}
\right.
\end{equation}
while for a put option with concave payoff $\phi(S) = \max(K-S,0)$, the equation becomes:
\begin{equation}
\left\{
\begin{aligned}
&\frac{\partial U}{\partial t} + rS \frac{\partial U}{\partial S}
+ \frac{1}{2} \left(
\sigma^2 S^2 \underline{\Sigma}^2 \frac{\partial^2 U}{\partial S^2} \right)
- rU = 0, \quad 0 \leq t \leq T, \\
&U(t, S)\big|_{t = T} = max(K-S,0).
\end{aligned}
\right.
\end{equation}
% The proof can be found in the appendix \ref{Appendix:Property}.
\end{remark}

\begin{remark}
    Clearly, when the symmetric G-Brownian motion degenerates to a standard Brownian motion, the uncertainty in $\Sigma$ vanishes, and the equation \eqref{G_S_option_eq} reduces to the classical Black-Scholes equation.
    % we can solve the eqaution by FDM (refer to \cite{}).
\end{remark}

\subsection{Logarithmic Transformation and Alternative Formulation}

After establishing the G-option pricing model under the asset price process \( S_t \) and deriving the associated fully nonlinear PDE, to improve computational efficiency, we introduce a logarithmic transformation on the underlying asset by setting \( X = \ln S \). The convenience of this calculation is mainly reflected in the explicit difference method. We will elaborate on the advantages of the logarithmic transformation in the numerical analysis of the explicit method in Section \ref{sec:5}.

Under this transformation, we formulate an alternative nonlinear PDE given by,
\begin{equation} \label{G-option_eq}
    \left\{
    \begin{aligned}
    &\frac{\partial V}{\partial t} + r\frac{\partial V}{\partial X}
    + \sup_{\Sigma \in [\underline{\Sigma}, \overline{\Sigma}]} \left[
    \frac{1}{2} \sigma^2 \Sigma^2 \cdot \left(\frac{\partial^2 V}{\partial X^2} - \frac{\partial V}{\partial X} \right) \right]
    - rV = 0, \quad 0 \leq t \leq T, \\
    &V(t, X)\big|_{t = T} = \phi(e^{X}).
    \end{aligned}
    \right.
\end{equation}
%This paper propose the FDM for this equation \eqref{G-option_eq}.

\begin{remark}
    When the symmetric G-Brownian motion degenerates into a classical Brownian motion, equation \eqref{G-option_eq} reduces to the classical result.
    Moreover, we consider the log-price process $X_t$,
    \begin{equation}\label{SOCM_X}
        \left\{ \begin{array}{l}
        d{X_t} = rdt - \frac{\sigma^2}{2}  d\langle B \rangle_t + \sigma dB_t,  \\
        X_0 = ln(s_0),
    \end{array} \right.
    \end{equation}
    the viscosity solution to the nonlinear PDE \eqref{G-option_eq} corresponds to the value function of the following stochastic problem,
    \begin{equation}\label{SOCM_V}
        V\left( t,x\right) = \mathbb{E}\left[ \left. e^{-r\left( T-t\right) }\phi \left( e^{X_{T}}\right) \right\vert X_{t}=x\right].
    \end{equation}
\end{remark}

In the following Section \ref{sec:4}, we will present explicit and implicit difference methods for solving the nonlinear PDE \eqref{G-option_eq} respectively.

\section{Finite difference methods}
%We have demonstrated that, when asset prices follow G-geometric Brownian
%motion, the risk-neutral option prices—under the framework of nonlinear expectations—satisfy both the stochastic optimal control models (SOCMs) and the corresponding nonlinear Black–Scholes-type equations \eqref{G_S_option_eq} and \eqref{G-option_eq}. In this section, we focus on developing an efficient finite difference method (FDM) to numerically solve the nonlinear partial differential equation \eqref{G-option_eq}, which arises from the logarithmic transformation of the asset price process.
{\label{sec:4}}
% We have demonstrated that, when asset prices follow G-geometric Brownian motion, the risk-neutral option prices under the framework of nonlinear expectations satisfy the following SOCMs and nonlinear Black-Scholes equations.
Without loss of generality (WLOG),  we set the volatility parameter \( \sigma = 1 \).
 Under this assumption, the asset price dynamics given in \eqref{G_St} can be rewritten as \begin{equation}
S_{t}=S_{0}e^{rt-\frac{1}{2}\left\langle B\right\rangle _{t}+B_{t}}.
\end{equation}
Then PDE \eqref{G_S_option_eq} becomes
% where $B_{t}\overset{d}{=}\mathcal{N}\left( 0, \left[\underline{\Sigma}^{2}, \overline{\Sigma}^{2}  \right]\right) $, denotes a G-Brownian motion with uncertain volatility in the interval $\left[\underline{\Sigma}^{2}, \overline{\Sigma}^{2}  \right]$.
% Let  $%
% \phi \left( S_{T}\right) $ be the payoff function, then the option value is given by
% \begin{equation*}
% U\left( t,S\right) =\mathbb{E}\left[ \left. e^{-r\left( T-t\right) }\phi
% \left( S_{T}\right) \right\vert S_{t}=s\right] .
% \end{equation*}%
% The corresponding nonlinear PDE for \( U(t, S) \) takes the form,
\begin{equation}
 \left\{
\begin{array}{l}
 U_{t}+rSU_{S}+\frac{1}{2}\underset{\Sigma\in \left[ \underline{\Sigma }%
,\overline{\Sigma }\right] }{\sup }\Sigma ^{2}S^{2}U_{SS}-rU=0, \\
U_{T}=\phi \left( S\right) .
 \end{array}
 \right.  \label{G-op1}
\end{equation}
% % At $s=0$, we have the boundary condition \begin{equation}
% % U_{t}=rU,
% % \label{ss}
% % \end{equation}
% % while at
Moreover, we make a substitution of independent variable $t=T-t$ to transform PDE  \eqref{G-option_eq}  into the Cauchy problem of a fully nonlinear parabolic equation. The PDE \eqref{G-option_eq}
can be rewritten as
\begin{equation}
\left\{
\begin{array}{l}
V_{t}-rV_{X}+rV-\frac{1}{2}\underset{\Sigma \in \left[ \underline{\Sigma
},\overline{\Sigma }\right] }{\sup }\Sigma ^{2} \cdot \left(
V_{XX}-V_{X}\right) =0, \\
V_{0}=\phi \left( e^{X}\right) .%
\end{array}%
\right.  \label{G-option11}
\end{equation}%
At $X\rightarrow -\infty ,$ we have the boundary condition
\begin{equation}
V_{t}=-rV.
\label{xs}
\end{equation}
% while at $x\rightarrow \infty $, we have a Dirichlet condition
% \begin{equation}
% V\simeq A(t)e^{x}+B(t),
% \label{xl}
% \end{equation}%
% where $A$ and $B$ can be determined by financial reasoning.
It is easy to see that the optimal $\Sigma ^{*}$ is reached at the end of
the intervals, depending only on the sign of the difference between the
second-order derivative and the first-order derivative, that is,
\begin{equation}
\Sigma ^{*} = \Sigma \left( V_{XX}-V_{X}\right) =\left\{
\begin{array}{l}
\overline{\Sigma }, \quad\text{ \ \ }V_{XX}-V_{X}\geq 0, \\
\underline{\Sigma }, \quad \text{ \ \ }V_{XX}-V_{X}<0.%
\end{array}%
\right.   \label{b12}
\end{equation}%
% We will illustrate the advantages of applying such a logarithmic
% transformation in the following numerical analysis.

\subsection{Discretization}
For computational purpose, we confine the problem \eqref{G-option11} within a truncated bounded domain,
\begin{equation*}
0\leq t\leq T\text{ and }\left\vert X\right\vert <L,
\end{equation*}%
with artificial boundary condition at the right boundary. For different
contracts, the artificial boundary conditions are modified accordingly.
Subsequently, the problem is reformulated as

\begin{equation}
\left\{
\begin{array}{l}
V_{t}-rV_{X}+rV-\frac{\Sigma ^{2}\left( V_{XX}-V_{X}\right) }{2}\left(
V_{XX}-V_{X}\right) =0,\quad t\in (0,T),\text{ }X\in (-L,L), \\
\left. V\right\vert _{t=0}=\phi (e^{X}), \\
\left. V_{t}\right\vert _{X=-L}=-rV, \\
\left. V\right\vert _{X=L}=\varphi(t) .%
\end{array}%
\right.   \label{G-option1}
\end{equation}

Taking an equi-distance partition with a spatial step size $h=2L/M,\Delta
t=T/N$, we have grid points $X_{i}=-L+i\ast h,t^{n}=n\Delta t,$ for $%
i=0,\cdots ,M,$ and $n=0,\cdots ,N$. Let $V_{i}^{n}$ denote the approximate solution at $\left(
t^{n},X_{i}\right) $. In this paper, numerical methods are designed to solve PDE \eqref{G-option1}.
\subsection{Explicit difference method}
An explicit scheme for equation \eqref{G-option1}
reads as, for  $n=0,\cdots ,N-1$,
\begin{equation}
\left\{
\begin{array}{l}
\delta _{t}V_{i}^{n+1}-r\delta _{h}V_{i}^{n}+rV_{i}^{n+1}-\frac{\Sigma
^{2}\left( \delta _{h}^{2}V_{i}^{n}-\delta _{h}V_{i}^{n}\right) }{2}\left(
\delta _{h}^{2}V_{i}^{n}-\delta _{h}V_{i}^{n}\right) =0,0<i<M, \\
V_{i}^{0}=\phi \left( e^{X_{i}}\right) ,i=0,...,M, \\
\left. \delta _{t}V_{i}^{n+1}\right\vert _{X_{i}=-L}=-rV_{i}^{n+1}, \\
\left. V_{i}^{n+1}\right\vert _{X_{i}=L}=\varphi ^{n+1},%
\end{array}%
\right. \label{explicit1}
\end{equation}
where
\begin{eqnarray}
\delta _{t}V_{i}^{n+1} &=&\frac{V_{i}^{n+1}-V_{i}^{n}}{\Delta t},\text{ }%
\delta _{h}V_{i}^{n}=\frac{V_{i+1}^{n}-V_{i-1}^{n}}{2h},\text{ \ }
\label{sign1} \\
\delta _{h}^{2}V_{i}^{n} &=&\frac{V_{i+1}^{n}-2V_{i}^{n}+V_{i-1}^{n}}{h^{2}%
}.  \label{sign2}
\end{eqnarray}%

\subsection{Implicit difference method}
An implicit scheme for equation \eqref{G-option1}
reads as, for $n=0,\cdots ,N-1$,
\begin{equation}
\left\{
\begin{array}{l}
\delta _{t}V_{i}^{n+1}-r\delta _{h}V_{i}^{n+1}+rV_{i}^{n+1}-\frac{\Sigma
^{2}\left( \delta _{h}^{2}V_{i}^{n+1}-\delta _{h}V_{i}^{n+1}\right) }{2}%
\left( \delta _{h}^{2}V_{i}^{n+1}-\delta _{h}V_{i}^{n+1}\right) =0,0<i<M, \\
V_{i}^{0}=\phi \left( e^{X_{i}}\right) ,i=0,...,M, \\
\left. \delta _{t}V_{i}^{n+1}\right\vert _{X_{i}=-L}=-rV_{i}^{n+1}, \\
\left. V_{i}^{n+1}\right\vert _{X_{i}=L}=\varphi ^{n+1},%
\end{array}%
\right.   \label{implicit1}
\end{equation}
where $\delta _{t}V_{i}^{n+1}$, $\delta _{h}V_{i}^{n+1}$, $\delta _{h}^{2}V_{i}^{n+1}$ take the same forms as in equations \eqref{sign1}-\eqref{sign2}. Since \eqref{implicit1} is a nonlinear system, an inner iteration is needed to obtain the solution $V_{i}^{n+1}$ in each time step. Let $V_{i}^{n+1,k}$ be the $k^{th}$ estimate for $V_{i}^{n+1}$, $V_{i}^{n+1,k}$ is given by the following Picard's iteration,
\begin{equation}
\left\{
\begin{aligned}
&\delta _{t}V_{i}^{n+1,k+1}-r\delta _{h}V_{i}^{n+1,k+1}+rV_{i}^{n+1,k+1} \\
&\quad -\frac{\Sigma ^{2}\left( \delta _{h}^{2}V_{i}^{n+1,k}-\delta
_{h}V_{i}^{n+1,k}\right) }{2}\left( \delta _{h}^{2}V_{i}^{n+1,k+1}-\delta
_{h}V_{i}^{n+1,k+1}\right) =0, \\
&V_{i}^{0}=\phi \left( e^{X_{i}}\right) , \\
&\left. \delta _{t}V_{i}^{n+1,k+1}\right\vert _{X_{i}=-L}=-rV_{i}^{n+1,k+1}, \\
&\left. V_{i}^{n+1,k+1}\right\vert _{X_{i}=L}=\varphi ^{n+1}.
\end{aligned}
\right.   \label{iteration}
\end{equation}
\begin{comment}
\begin{equation}
\left\{
\begin{array}{l}
\delta _{t}V_{i}^{n+1,k+1}-r\delta _{h}V_{i}^{n+1,k+1}+rV_{i}^{n+1,k+1}-%
\frac{\Sigma ^{2}\left( \delta _{h}^{2}V_{i}^{n+1,k}-\delta
_{h}V_{i}^{n+1,k}\right) }{2}\left( \delta _{h}^{2}V_{i}^{n+1,k+1}-\delta
_{h}V_{i}^{n+1,k+1}\right) =0, \\
V_{i}^{0}=\phi \left( e^{X_{i}}\right) , \\
\left. \delta _{t}V_{i}^{n+1,k+1}\right\vert _{X_{i}=-L}=-rV_{i}^{n+1,k+1},
\\
\left. V_{i}^{n+1,k+1}\right\vert _{X_{i}=L}=\varphi ^{n+1}.%
\end{array}%
\right.   \label{iteration}
\end{equation}
\end{comment}

\section{Numerical analysis for numerical methods}\label{sec:5}
From the work of Barles and Souganidis \cite{BS2}, we know that numerical solution of %
\eqref{explicit1} or \eqref{implicit1} converges to the viscosity solution of the PDE %
\eqref{G-option1} if the method is consistent, stable (in the $l_{\infty }$
norm) and monotone.

We now give the definition of monotonicity. Denote by
\begin{equation*}
g_{i}=g\left( V_{i}^{n+1},\{V_{k}^{n+1}\}_{k\in
N_{i}},V_{i}^{n},\{V_{k}^{n}\}_{k\in N_{i}}\right)
\end{equation*}%
the left-hand side of the difference equation \eqref{explicit1} or %
\eqref{implicit1}. Here, $N_{i}=\{i+1,i-1\}$ presents the set of all
nearest-neighbor indexes of $i$.

\begin{definition}
\textrm{(Monotonicity)} \label{def-mono} The scheme \eqref{explicit1} or \eqref{implicit1} is
monotone if for all $i$,
\begin{equation}
\begin{aligned}
&g_{i}\left( V_{i}^{n+1}+\epsilon _{i}^{n+1},\{V_{k}^{n+1}\}_{k\in
N_{i}},V_{i}^{n},\{V_{k}^{n}\}_{k\in N_{i}}\right) \\
\geq &g_{i}\left(
V_{i}^{n+1},\{V_{k}^{n+1}\}_{k\in N_{i}},V_{i}^{n},\{V_{k}^{n}\}_{k\in
N_{i}}\right) , \quad
 \epsilon _{i}^{n+1}\geq 0,%
\end{aligned}
\label{g-diag}
\end{equation}%
and%
\begin{equation}
\begin{aligned}
&g_{i}\left( V_{i}^{n+1},\{V_{k}^{n+1}+\epsilon _{k}^{n+1}\}_{k\in
N_{i}},V_{i}^{n}+\epsilon _{i}^{n},\{V_{k}^{n}+\epsilon _{k}^{n}\}_{k\in
N_{i}}\right) \\
\leq &g_{i}\left( V_{i}^{n+1},\{V_{k}^{n+1}\}_{k\in
N_{i}},V_{i}^{n},\{V_{k}^{n}\}_{k\in N_{i}}\right) , 
\quad \epsilon _{i}^{n},\epsilon _{k}^{n+1},\epsilon _{k}^{n}\geq 0.%
\end{aligned}
\label{g-off-diag}
\end{equation}
\end{definition}% \begin{definition}
% \textrm{(Monotonicity)} \label{def-mono} The scheme \eqref{explicit1} is
% monotone if for all $i$,
% \begin{equation}
% g_{i}\left( V_{i}^{n+1}+\epsilon _{i}^{n+1},\{V_{k}^{n+1}\}_{k\in
% N_{i}},V_{i}^{n},\{V_{k}^{n}\}_{k\in N_{i}}\right) \geq g_{i}\left(
% V_{i}^{n+1},\{V_{k}^{n+1}\}_{k\in N_{i}},V_{i}^{n},\{V_{k}^{n}\}_{k\in
% N_{i}}\right) ,\text{ }\forall \epsilon _{i}^{n+1}\geq 0,  \label{g-diag}
% \end{equation}%
% and
% \begin{equation}
% g_{i}\left( V_{i}^{n+1},\{V_{k}^{n+1}+\epsilon _{k}^{n+1}\}_{k\in
% N_{i}},V_{i}^{n}+\epsilon _{i}^{n},\{V_{k}^{n}+\epsilon _{k}^{n}\}_{k\in
% N_{i}}\right) \leq g_{i}\left( V_{i}^{n+1},\{V_{k}^{n+1}\}_{k\in
% N_{i}},V_{i}^{n},\{V_{k}^{n}\}_{k\in N_{i}}\right) ,\text{ }\forall \epsilon
% _{i}^{n},\epsilon _{k}^{n+1},\epsilon _{k}^{n}\geq 0.  \label{g-off-diag}
% \end{equation}
% \end{definition}
\subsection{Monotonicity and Convergence of explicit scheme \texorpdfstring{\eqref{explicit1}}{explicit scheme}}
% \begin{assumption}
% \label{assume1}The constraint on the mesh ratio is $\overline{\Sigma }\sqrt{%
% \Delta t}\leq h\leq \frac{2\underline{\Sigma }^{2}}{\max \left( 2r-%
% \underline{\Sigma }^{2},\overline{\Sigma }^{2}-2r\right) }$.
% \end{assumption}

\begin{lemma}
\label{Consistency} \textrm{(Consistency) }The explicit scheme %
\eqref{explicit1} is consistent.
\end{lemma}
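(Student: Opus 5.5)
We establish consistency by inserting a smooth test function into the scheme and comparing with the PDE operator of \eqref{G-option1}. Because the nonlinearity is discontinuous in its argument (the switch \eqref{b12} between $\overline{\Sigma}$ and $\underline{\Sigma}$), the comparison is made at the level of the Hamiltonian. Concretely, we rewrite the discrete nonlinear term as
\[
\frac{\Sigma^{2}(a)}{2}\,a=\frac12\sup_{\Sigma\in[\underline{\Sigma},\overline{\Sigma}]}\Sigma^{2}a ,
\]
which holds for every real $a$ by the definition of $\Sigma^{*}$ in \eqref{b12}. The map $a\mapsto \frac12\sup_{\Sigma}\Sigma^{2}a$ is Lipschitz with constant $\overline{\Sigma}^{2}/2$. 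This removes the apparent discontinuity, and consistency in the sense of Barles--Souganidis \cite{BS2} reduces to a local truncation error estimate.

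\textbf{Step 1: interior nodes.} Take $\psi\in C^{\infty}$ with bounded derivatives. Define the scheme residual at $(t^{n+1},X_i)$, $0<i<M$, by applying the left-hand side of \eqref{explicit1} to the grid values $\psi_i^n=\psi(t^n,X_i)$. Taylor expansion gives
\begin{align*}
\delta_t\psi_i^{n+1}&=\psi_t(t^{n+1},X_i)+O(\Delta t),\\
\delta_h\psi_i^{n}&=\psi_X(t^{n},X_i)+O(h^2),\\
\delta_h^2\psi_i^{n}&=\psi_{XX}(t^{n},X_i)+O(h^2).
\end{align*}
Shifting the spatial derivatives from $t^n$ to $t^{n+1}$ costs a further $O(\Delta t)$. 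Using the Lipschitz bound on the Hamiltonian, the nonlinear term differs from $\frac12\sup_\Sigma\Sigma^2(\psi_{XX}-\psi_X)(t^{n+1},X_i)$ by at most
\[
\tfrac{\overline{\Sigma}^2}{2}\bigl(O(h^2)+O(\Delta t)\bigr).
\]
Hence the residual equals the PDE operator applied to $\psi$ plus $O(\Delta t+h^2)$, which tends to $0$.

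\textbf{Step 2: perturbations by constants.} For the Barles--Souganidis notion we also need the same estimate for $\psi+\xi$ evaluated at grid points $(t^{n+1},X_i)\to(t_0,X_0)$ with $\xi\to0$. The constant $\xi$ cancels in every difference quotient and contributes only $r\xi\to0$ through the zeroth-order term. The error bounds in Step 1 are uniform because the derivatives of $\psi$ are bounded on the compact domain, so the limit statement follows.

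\textbf{Step 3: boundary and initial conditions.} At $X=-L$ the discrete condition $\delta_t V=-rV$ is a first-order-in-time discretization of $V_t=-rV$, so it is consistent by the same Taylor argument. At $X=L$ and at $t=0$ the conditions are imposed exactly.

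The only genuinely delicate point is the discontinuity of $\Sigma^{*}$ in Step 1. A naive expansion could place the discrete argument $\delta_h^2\psi-\delta_h\psi$ and the continuous one $\psi_{XX}-\psi_X$ on opposite sides of zero, so that different values of $\Sigma^*$ are selected. Writing the term as a supremum, which is continuous and Lipschitz, handles this. At points where $\psi_{XX}-\psi_X=0$ both sides are $O(h^2+\Delta t)$ anyway.
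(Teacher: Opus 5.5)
Your proof is correct and follows the same route as the paper. The paper's one-line argument is that the discrete equation tends to the PDE as $h,\Delta t\to 0$ because the $\sup$ operation is continuous; you make this explicit through the identity $\frac{\Sigma^{2}(a)}{2}a=\frac12\sup_{\Sigma\in[\underline{\Sigma},\overline{\Sigma}]}\Sigma^{2}a$, its $\overline{\Sigma}^{2}/2$-Lipschitz bound, the $O(\Delta t+h^{2})$ Taylor estimate, and the constant-shift and boundary checks that the paper leaves implicit.
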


\begin{proof}
Consistency is verified by observing that the equation in \eqref{explicit1} tends
to the PDE \eqref{G-option1} as $h,\Delta t\rightarrow 0$, since the `$\sup $%
' operation is continuous.
\end{proof}

% We now give the definition of monotonicity. Denote by
% \begin{equation*}
% g_{i}=g\left( V_{i}^{n+1},V_{i}^{n},\{V_{k}^{n}\}_{k\in N_{i}}\right)
% \end{equation*}%
% the left-hand side of the difference equation \eqref{explicit1}. Here, $%
% N_{i}=\{i+1,i-1\}$ presents the set of all nearest-neighbor indexes of $i$.

% \begin{definition}
% \textrm{(Monotonicity)} \label{def-mono} The scheme \eqref{explicit1} is
% monotone if for all $i$,
% \begin{equation}
% g_{i}\left( V_{i}^{n+1}+\epsilon _{i}^{n+1},V_{i}^{n},\{V_{k}^{n}\}_{k\in
% N_{i}}\right) \geq g_{i}\left( V_{i}^{n+1},V_{i}^{n},\{V_{k}^{n}\}_{k\in
% N_{i}}\right) ,\text{ }\forall \epsilon _{i}^{n+1}\geq 0,  \label{g-diag}
% \end{equation}%
% and
% \begin{equation}
% g_{i}\left( V_{i}^{n+1},V_{i}^{n}+\epsilon _{i}^{n},\{V_{k}^{n}+\epsilon
% _{k}^{n}\}_{k\in N_{i}}\right) \leq g_{i}\left(
% V_{i}^{n+1},V_{i}^{n},\{V_{k}^{n}\}_{k\in N_{i}}\right) ,\text{ }\forall
% \epsilon _{i}^{n},\epsilon _{k}^{n}\geq 0.  \label{g-off-diag}
% \end{equation}
% \end{definition}

\begin{lemma}
\label{mono11}
\textrm{(Monotonicity)} If condition $\overline{\Sigma }\sqrt{\Delta t}\leq h\leq \frac{2\underline{\Sigma }^{2}}{\max \left( 2r-\underline{\Sigma }^{2},\overline{\Sigma }^{2}-2r\right) }$ is satisfied, then the explicit
scheme \eqref{explicit1} is monotone.
\end{lemma}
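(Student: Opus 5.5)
The plan is to write the left-hand side $g_i$ of \eqref{explicit1} explicitly as a function of the grid values. The nonsmooth term $\frac{\Sigma^2(\cdot)}{2}(\cdot)$ should be rewritten as $\sup_{\Sigma\in[\underline{\Sigma},\overline{\Sigma}]}\frac{\Sigma^2}{2}(\cdot)$. This equality holds by \eqref{b12}, because the choice of $\Sigma$ there attains the supremum of a function linear in $\Sigma^2$. With this, $g_i$ is an \emph{infimum over $\Sigma$ of affine functions} of the grid values. An infimum of functions that are each nondecreasing in $V_i^{n+1}$ and each nonincreasing in the remaining arguments inherits both properties. So it suffices to check the sign of every coefficient for each fixed $\Sigma$, uniformly in $\Sigma\in[\underline{\Sigma},\overline{\Sigma}]$. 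This reduction is the only real conceptual step; the rest is coefficient bookkeeping.

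Concretely, for $0<i<M$ and fixed $\Sigma$, the affine function is
\begin{align*}
g_i^{\Sigma} = {} & \Big(\frac{1}{\Delta t}+r\Big)V_i^{n+1} + \Big(\frac{\Sigma^2}{h^2}-\frac{1}{\Delta t}\Big)V_i^{n} \\
& -\frac{1}{2h}\Big(r+\frac{\Sigma^2}{h}-\frac{\Sigma^2}{2}\Big)V_{i+1}^{n} - \frac{1}{2h}\Big(\frac{\Sigma^2}{h}+\frac{\Sigma^2}{2}-r\Big)V_{i-1}^{n},
\end{align*}
and $g_i=\inf_{\Sigma}g_i^{\Sigma}$. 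The coefficient of $V_i^{n+1}$ is $\frac{1}{\Delta t}+r>0$, which gives \eqref{g-diag}. For \eqref{g-off-diag}, $g_i$ does not depend on $V_k^{n+1}$, $k\in N_i$, so three sign conditions are needed for all admissible $\Sigma$.

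First, $\Sigma^2\Delta t\le h^2$ for every $\Sigma$ is equivalent to $\overline{\Sigma}\sqrt{\Delta t}\le h$, the left half of the hypothesis.

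Second, the $V_{i+1}^n$ coefficient is nonpositive iff $h(\Sigma^2-2r)\le 2\Sigma^2$. This is automatic when $\Sigma^2\le 2r$. Otherwise it follows from $h(\overline{\Sigma}^2-2r)\le 2\underline{\Sigma}^2\le 2\Sigma^2$, using $\Sigma^2-2r\le\overline{\Sigma}^2-2r$.

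Third, the $V_{i-1}^n$ coefficient is nonpositive iff $h(2r-\Sigma^2)\le 2\Sigma^2$. This is automatic when $\Sigma^2\ge 2r$. Otherwise it follows from $h(2r-\underline{\Sigma}^2)\le 2\underline{\Sigma}^2\le 2\Sigma^2$, using $2r-\Sigma^2\le 2r-\underline{\Sigma}^2$.

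The second and third conditions together are exactly the right half $h\le 2\underline{\Sigma}^2/\max(2r-\underline{\Sigma}^2,\overline{\Sigma}^2-2r)$ of the hypothesis. If the maximum is nonpositive, both conditions hold for every $h$ and there is nothing to check.

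It remains to treat the boundary nodes. At $i=0$ the scheme reads $g_0=(\frac{1}{\Delta t}+r)V_0^{n+1}-\frac{1}{\Delta t}V_0^n$, which is increasing in $V_0^{n+1}$ and decreasing in $V_0^n$. At $i=M$ the equation $V_M^{n+1}-\varphi^{n+1}=0$ is trivially monotone.

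The step I expect to need the most care is the passage from the nonlinear, discontinuous-in-argument coefficient $\Sigma^2(\delta_h^2V-\delta_hV)$ to the inf-of-linear representation. One must note that perturbing $V^n$ can switch the selected $\Sigma^*$. This is harmless precisely because $\Sigma^*$ realizes the sup, so $g_i$ is the pointwise infimum and monotonicity is preserved across the switch. After that, the only subtlety is ensuring that the sign conditions hold uniformly over $\Sigma$, which is where the mixed $\underline{\Sigma}/\overline{\Sigma}$ form of the mesh bound arises.
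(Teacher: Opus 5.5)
Your proposal is correct and is essentially the paper's argument. The paper adds and subtracts $\frac{\widehat{\Sigma}^2}{2}(\delta_h^2\widetilde{V}_i^n-\delta_h\widetilde{V}_i^n)$ and uses that $\widetilde{\Sigma}$ realizes the supremum; this is exactly the inequality $g_i(\widetilde{V})=\inf_{\Sigma} g_i^{\Sigma}(\widetilde{V})\le g_i^{\widehat{\Sigma}}(\widetilde{V})$ behind your inf-of-affine reduction, and both proofs then read off the same three coefficient signs. You additionally treat the boundary rows and carry out the sign check that the paper leaves implicit.

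One small overstatement: the two neighbor sign conditions are implied by, not equivalent to, the right half of the mesh bound. Uniformly in $\Sigma$, the $V_{i+1}^n$ condition only requires $h(\overline{\Sigma}^2-2r)\le 2\overline{\Sigma}^2$.
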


\begin{proof}
We first consider the perturbation on $V_{i}^{n+1}$, it is obvious from
equation \eqref{explicit1} that
\begin{equation}
g_{i}\left( V_{i}^{n+1}+\epsilon _{i}^{n+1},V_{i}^{n},\{V_{k}^{n}\}_{k\in
N_{i}}\right) -g_{i}\left( V_{i}^{n+1},V_{i}^{n},\{V_{k}^{n}\}_{k\in
N_{i}}\right) \geq 0.  \label{T-diag}
\end{equation}%
We now turn to the perturbation on $V_{i}^{n}$ and $\{V_{k}^{n}\}_{k\in
N_{i}}.$ Denote by $\widetilde{V}_{i}^{n}=V_{i}^{n}+$ $\epsilon _{i}^{n},$
for $\epsilon _{i}^{n}\geq 0.$ We also denote by $\widetilde{V}%
_{k}^{n}=V_{k}^{n}+\epsilon _{k}^{n}$, for $\epsilon _{k}^{n}\geq 0$ and $%
k\in N_{i}$. Then the difference between the two sides of the inequality %
\eqref{g-off-diag} is
 \begin{eqnarray*}
T: &=&g_{i}\left( V_{i}^{n+1},\widetilde{V}_{i}^{n},\{\widetilde{V}%
_{k}^{n}\}_{k\in N_{i}}\right) -g_{i}\left(
V_{i}^{n+1},V_{i}^{n},\{V_{k}^{n}\}_{k\in N_{i}}\right)   \notag \\
&=&-\frac{\epsilon _{i}^{n}}{\Delta t}-r\delta _{h}\widetilde{V}%
_{i}^{n}+r\delta _{h}V_{i}^{n}-\frac{\widetilde{\Sigma }^{2}}{2} \cdot \left(
\delta _{h}^{2}\widetilde{V}_{i}^{n}-\delta _{h}\widetilde{V}_{i}^{n}\right)
+\frac{\widehat{\Sigma }^{2}}{2} \cdot\left( \delta _{h}^{2}V_{i}^{n}-\delta
_{h}V_{i}^{n}\right)   \notag \\
&=&-\frac{\epsilon _{i}^{n}}{\Delta t}-r\delta _{h}\widetilde{V}%
_{i}^{n}+r\delta _{h}V_{i}^{n}-\frac{\widetilde{\Sigma }^{2}}{2} \cdot \left(
\delta _{h}^{2}\widetilde{V}_{i}^{n}-\delta _{h}\widetilde{V}_{i}^{n}\right)
+\frac{\widehat{\Sigma }^{2}}{2} \cdot \left( \delta _{h}^{2}\widetilde{V}%
_{i}^{n}-\delta _{h}\widetilde{V}_{i}^{n}\right)   \notag \\
&&+\frac{\widehat{\Sigma }^{2}}{2} \cdot \left( \delta _{h}^{2}V_{i}^{n}-\delta
_{h}V_{i}^{n}\right) -\frac{\widehat{\Sigma }^{2}}{2} \cdot \left( \delta _{h}^{2}%
\widetilde{V}_{i}^{n}-\delta _{h}\widetilde{V}_{i}^{n}\right)   \notag \\
&\leq &-\left( \frac{1}{\Delta t}-\frac{\widehat{\Sigma }^{2}}{h^{2}}\right)
\epsilon _{i}^{n}-\left( \frac{r}{2h}+\frac{\widehat{\Sigma }^{2}}{2h^{2}}-%
\frac{\widehat{\Sigma }^{2}}{4h}\right) \epsilon _{i+1}^{n}-\left( \frac{%
\widehat{\Sigma }^{2}}{2h^{2}}-\frac{r}{2h}+\frac{\widehat{\Sigma }^{2}}{4h}%
\right) \epsilon _{i-1}^{n},  
\end{eqnarray*}%
%\end{equation*}
since $\frac{\widetilde{\Sigma }^{2}} {2} \cdot \left( \delta _{h}^{2}\widetilde{V}%
_{i}^{n}-\delta _{h}\widetilde{V}_{i}^{n}\right) \geq \frac{\widehat{\Sigma }%
^{2}}{2}\cdot\left( \delta _{h}^{2}\widetilde{V}_{i}^{n}-\delta _{h}\widetilde{V}%
_{i}^{n}\right) ,$ where
\begin{equation*}
\widetilde{\Sigma }=\Sigma\left( \delta _{h}^{2}\widetilde{V}%
_{i}^{n}-\delta _{h}\widetilde{V}_{i}^{n}\right) ,\text{ }\widehat{\Sigma }%
=\Sigma \left( \delta _{h}^{2}V_{i}^{n}-\delta _{h}V_{i}^{n}\right) .
\end{equation*}%
If condition $\overline{\Sigma }\sqrt{\Delta t}\leq h\leq \frac{2\underline{\Sigma }^{2}}{\max \left( 2r-\underline{\Sigma }^{2},\overline{\Sigma }^{2}-2r\right) }$ is satisfied, then we obtain $T\leq 0.$ The monotonicity of
scheme \eqref{explicit1} now follows directly from Definition \ref{def-mono}.
\end{proof}
\begin{lemma}
\label{Stability} \textrm{(Stability)} If condition $\overline{\Sigma }\sqrt{%
\Delta t}\leq h\leq \frac{2\underline{\Sigma }^{2}}{\max \left( 2r-%
\underline{\Sigma }^{2},\overline{\Sigma }^{2}-2r\right) }$ is satisfied,
then the explicit scheme \eqref{explicit1} is stable, in the sense that
\begin{equation}
\underset{n}{\max }\left\vert \left\vert V^{n}\right\vert \right\vert
_{\infty }\leq \max \left( \left\vert \left\vert \phi \right\vert
\right\vert _{\infty },\underset{n}{\max }\left\vert \varphi ^{n}\right\vert
\right) .  \label{uncondition2}
\end{equation}
\end{lemma}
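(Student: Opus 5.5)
We will prove the bound by induction on $n$, via a discrete maximum principle for the explicit update. Write the interior equation of \eqref{explicit1} in solved form. With $\widehat{\Sigma}=\Sigma(\delta_h^2V_i^n-\delta_hV_i^n)$ frozen at its value at level $n$, the scheme is linear in $V^n$ at node $i$:
\begin{equation*}
(1+r\Delta t)V_i^{n+1}=a_iV_i^n+b_iV_{i+1}^n+c_iV_{i-1}^n,
\end{equation*}
with coefficients
\begin{equation*}
a_i=1-\frac{\widehat{\Sigma}^2\Delta t}{h^2},\quad
b_i=\Delta t\Big(\frac{r}{2h}+\frac{\widehat{\Sigma}^2}{2h^2}-\frac{\widehat{\Sigma}^2}{4h}\Big),\quad
c_i=\Delta t\Big(\frac{\widehat{\Sigma}^2}{2h^2}-\frac{r}{2h}+\frac{\widehat{\Sigma}^2}{4h}\Big).
\end{equation*}

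\textbf{Nonnegativity of the coefficients.} These are exactly the coefficients that appeared in the proof of Lemma \ref{mono11}. The key step is to check that they are all nonnegative under the mesh condition, for either choice $\widehat{\Sigma}\in\{\underline{\Sigma},\overline{\Sigma}\}$.
\begin{itemize}
\item[(i)] $a_i\ge 0$ follows from $\widehat{\Sigma}^2\Delta t\le\overline{\Sigma}^2\Delta t\le h^2$, which is the left inequality of the condition.
\item[(ii)] $b_i\ge0$ is equivalent to $h(\widehat{\Sigma}^2-2r)\le 2\widehat{\Sigma}^2$. This is trivial if $\widehat{\Sigma}^2\le 2r$. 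Otherwise it follows from $h\le 2\underline{\Sigma}^2/(\overline{\Sigma}^2-2r)\le 2\widehat{\Sigma}^2/(\widehat{\Sigma}^2-2r)$, using that $x\mapsto 2x/(x-2r)$ is decreasing for $x>2r$.
\item[(iii)] $c_i\ge0$ is equivalent to $h(2r-\widehat{\Sigma}^2)\le 2\widehat{\Sigma}^2$. This is handled the same way using $2r-\underline{\Sigma}^2$, since $2x/(2r-x)$ is increasing in $x$.
\end{itemize}
This case check on $\widehat{\Sigma}$ and the sign of $\widehat{\Sigma}^2-2r$ is where the care lies; the rest is routine.

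\textbf{The interior bound.} Since $a_i+b_i+c_i=1$ (the $r$ and $\widehat{\Sigma}^2/(4h)$ terms cancel and the $h^{-2}$ terms cancel against $a_i$), the right-hand side is a convex combination of $V_i^n$ and its two neighbours. Hence
\begin{equation*}
|V_i^{n+1}|\le\frac{1}{1+r\Delta t}\|V^n\|_\infty\le\|V^n\|_\infty\quad\text{for }0<i<M.
\end{equation*}

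\textbf{Boundary nodes.} At $X=-L$ the condition gives $V_0^{n+1}=V_0^n/(1+r\Delta t)$, so $|V_0^{n+1}|\le|V_0^n|$. At $X=L$ we have $V_M^{n+1}=\varphi^{n+1}$.

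\textbf{Induction.} Combining the three cases,
\begin{equation*}
\|V^{n+1}\|_\infty\le\max(\|V^n\|_\infty,|\varphi^{n+1}|).
\end{equation*}
Starting from $\|V^0\|_\infty\le\|\phi\|_\infty$ and inducting on $n$ yields \eqref{uncondition2}.
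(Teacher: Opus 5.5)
Your proposal is correct and follows the same route as the paper. You write the interior update as $(1+r\Delta t)V_i^{n+1}$ equal to a combination of $V_{i-1}^n,V_i^n,V_{i+1}^n$ whose coefficients are nonnegative and sum to one under the mesh condition, use $V_0^{n+1}=V_0^n/(1+r\Delta t)$ at the left boundary and $\varphi^{n+1}$ at the right boundary, and conclude.

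You are also more careful than the paper in two places. You actually verify the signs of the coefficients, which the paper leaves implicit. Your induction step $\|V^{n+1}\|_\infty\le\max(\|V^n\|_\infty,|\varphi^{n+1}|)$ is clean, whereas the paper's intermediate bound $\|V^{n+1}\|_\infty\le(1+r\Delta t)^{-N}\|\phi\|_\infty$ ignores the boundary data.
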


\begin{proof}
For notational simplicity, we let $\mathbbm{I}_{h}V_{i}^{n}=\delta
_{h}^{2}V_{i}^{n}-\delta _{h}V_{i}^{n}$, then we can rewrite %
\eqref{explicit1} as the following equation
%\begin{equation}
\begin{equation}
\begin{aligned}
V_{i}^{n+1} &=\frac{1}{1+r\Delta t}\left\{ \frac{\Delta t}{2h}\left( \frac{%
\Sigma ^{2}\left( \mathbbm{I}_{h}V_{i}^{n}\right) }{h}+\frac{\Sigma
^{2}\left( \mathbbm{I}_{h}V_{i}^{n}\right) }{2}-r\right) V_{i-1}^{n}\right. \\
&\left. +\left( 1-\frac{\Sigma ^{2}\left( \mathbbm{I}_{h}V_{i}^{n}\right)
\Delta t}{h^{2}}\right) V_{i}^{n}+\frac{\Delta t}{2h}\left( r+\frac{\Sigma
^{2}\left( \mathbbm{I}_{h}V_{i}^{n}\right) }{h}-\frac{\Sigma ^{2}\left( %
\mathbbm{I}_{h}V_{i}^{n}\right) }{2}\right) V_{i+1}^{n}\right\}.
\end{aligned}
\label{explicit2}
\end{equation}

% \begin{eqnarray}
% V_{i}^{n+1} &=&\frac{1}{1+r\Delta t}\left\{ \frac{\Delta t}{2h}\left( \frac{%
% \Sigma ^{2}\left( \mathbbm{I}_{h}V_{i}^{n}\right) }{h}+\frac{\Sigma
% ^{2}\left( \mathbbm{I}_{h}V_{i}^{n}\right) }{2}-r\right) V_{i-1}^{n}+\left( 1-\frac{\Sigma ^{2}\left( \mathbbm{I}_{h}V_{i}^{n}\right)
% \Delta t}{h^{2}}\right) V_{i}^{n}\right.
% \notag \\
% &&\left. +\frac{\Delta t}{2h}\left( r+\frac{\Sigma
% ^{2}\left( \mathbbm{I}_{h}V_{i}^{n}\right) }{h}-\frac{\Sigma ^{2}\left( %
% \mathbbm{I}_{h}V_{i}^{n}\right) }{2}\right) V_{i+1}^{n}\right\} .
% \label{explicit2}
% \end{eqnarray}
%\end{equation}
Let $i_{0}$ be an index such that $\left\vert V_{i_{0}}^{n+1}\right\vert
=\left\vert \left\vert V^{n+1}\right\vert \right\vert _{\infty }$. Then it is
easy to prove that the maximum value cannot be attained at the left
boundary. However it may be attained at the right boundary, that is, $i_{0}=M,$
we have
\begin{equation}
\left\vert \left\vert V^{n+1}\right\vert \right\vert _{\infty }\leq
\left\vert \varphi ^{n+1}\right\vert .  \label{sta1}
\end{equation}%
Then for $0<i_{0}<M,$ from the discrete scheme \eqref{explicit2} and the
condition $\overline{\Sigma }\sqrt{\Delta t}\leq h\leq \frac{2\underline{%
\Sigma }^{2}}{\max \left( 2r-\underline{\Sigma }^{2},\overline{\Sigma }%
^{2}-2r\right) }$, we have
\begin{equation}
\begin{aligned}
\left\vert \left\vert V^{n+1}\right\vert \right\vert _{\infty }
&=\left\vert V_{i_{0}}^{n+1}\right\vert\\
&\leq \frac{1}{1+r\Delta t}\left\{ \frac{\Delta t}{2h}\left( \frac{\Sigma
^{2}\left( \mathbbm{I}_{h}V_{i}^{n}\right) }{h}+\frac{\Sigma ^{2}\left( %
\mathbbm{I}_{h}V_{i}^{n}\right) }{2}-r\right) \left\vert \left\vert
V^{n}\right\vert \right\vert _{\infty }\right.\\
&\left. +\left( 1-\frac{\Sigma ^{2}\left( \mathbbm{I}_{h}V_{i}^{n}\right)
\Delta t}{h^{2}}\right) \left\vert \left\vert V^{n}\right\vert \right\vert
_{\infty }+\frac{\Delta t}{2h}\left( r+\frac{\Sigma ^{2}\left( \mathbbm{I}%
_{h}V_{i}^{n}\right) }{h}-\frac{\Sigma ^{2}\left( \mathbbm{I}%
_{h}V_{i}^{n}\right) }{2}\right) \left\vert \left\vert V^{n}\right\vert
\right\vert _{\infty }\right\} \\
&\leq \frac{1}{1+r\Delta t}\left\vert \left\vert V^{n}\right\vert
\right\vert _{\infty }
\end{aligned}
\end{equation}
% \begin{eqnarray*}
% \left\vert \left\vert V^{n+1}\right\vert \right\vert _{\infty }
% &=&\left\vert V_{i_{0}}^{n+1}\right\vert   \notag \\
% &\leq &\frac{1}{1+r\Delta t}\left\{ \frac{\Delta t}{2h}\left( \frac{\Sigma
% ^{2}\left( \mathbbm{I}_{h}V_{i}^{n}\right) }{h}+\frac{\Sigma ^{2}\left( %
% \mathbbm{I}_{h}V_{i}^{n}\right) }{2}-r\right) \left\vert \left\vert
% V^{n}\right\vert \right\vert _{\infty }\right.   \notag \\
% &&\left. +\left( 1-\frac{\Sigma ^{2}\left( \mathbbm{I}_{h}V_{i}^{n}\right)
% \Delta t}{h^{2}}\right) \left\vert \left\vert V^{n}\right\vert \right\vert
% _{\infty }+\frac{\Delta t}{2h}\left( r+\frac{\Sigma ^{2}\left( \mathbbm{I}%
% _{h}V_{i}^{n}\right) }{h}-\frac{\Sigma ^{2}\left( \mathbbm{I}%
% _{h}V_{i}^{n}\right) }{2}\right) \left\vert \left\vert V^{n}\right\vert
% \right\vert _{\infty }\right\}   \notag \\
% &\leq &\frac{1}{1+r\Delta t}\left\vert \left\vert V^{n}\right\vert
% \right\vert _{\infty },
% \end{eqnarray*}%
which then results in
\begin{equation}
\left\vert \left\vert V^{n+1}\right\vert \right\vert _{\infty }\leq \frac{1}{%
\left( 1+r\Delta t\right) ^{N}}\left\vert \left\vert \phi \right\vert
\right\vert _{\infty }.  \label{sta2}
\end{equation}%
Combining equations \eqref{sta1} and \eqref{sta2} gives%
\begin{equation*}
\underset{n}{\max }\left\vert \left\vert V^{n}\right\vert \right\vert
_{\infty }\leq \max \left( \left\vert \left\vert \phi \right\vert
\right\vert _{\infty },\underset{n}{\max }\left\vert \varphi ^{n}\right\vert
\right) .
\end{equation*}
\end{proof}
\begin{theorem}
\textrm{(Convergence to the viscosity solution) }
Suppose the time step $\Delta t$ and space step $h$ satisfy the stability condition
$$
\overline{\Sigma }\sqrt{\Delta t}\leq h\leq \frac{2\underline{\Sigma }^{2}}{\max \left( 2r-\underline{\Sigma }^{2},\overline{\Sigma }^{2}-2r\right) },
$$
provided that $\max \left( 2r-\underline{\Sigma }^{2},\overline{\Sigma }^{2}-2r\right) > 0$.
Then, the numerical solution generated by the explicit scheme \eqref{explicit1} converges to the viscosity solution of the nonlinear PDE \eqref{G-option1} as $h \to 0$.
\label{Convergence1}
\end{theorem}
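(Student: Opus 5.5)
The plan is to invoke the Barles--Souganidis framework \cite{BS2} quoted at the start of Section \ref{sec:5}. A monotone, $l_\infty$-stable and consistent approximation scheme converges locally uniformly to the unique viscosity solution of the limiting equation, provided that equation satisfies a strong comparison principle. The three scheme properties have already been established under exactly the stated mesh condition: consistency in Lemma \ref{Consistency}, monotonicity in Lemma \ref{mono11} and stability in Lemma \ref{Stability}. The proof therefore consists of checking that the hypotheses line up, and then running the half-relaxed limit argument.

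The first step is to note why the assumption $\max(2r-\underline{\Sigma}^2,\overline{\Sigma}^2-2r)>0$ is needed. It makes the upper bound $h_{\max}:=2\underline{\Sigma}^2/\max(2r-\underline{\Sigma}^2,\overline{\Sigma}^2-2r)$ finite and positive. We then take a refinement sequence $h\to0$ with $h\le h_{\max}$ and $\Delta t\le h^2/\overline{\Sigma}^2$, so $\Delta t\to0$ as well. Along this sequence every coefficient in \eqref{explicit2} is nonnegative, and Lemmas \ref{mono11} and \ref{Stability} apply uniformly.

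The second step is to define the half-relaxed limits
\[
\overline{V}(t,X)=\limsup_{h\to0,\,(t^n,X_i)\to(t,X)}V_i^n,\qquad \underline{V}(t,X)=\liminf_{h\to0,\,(t^n,X_i)\to(t,X)}V_i^n .
\]
These are finite by the uniform bound \eqref{uncondition2}.

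The third step shows that $\overline{V}$ is a viscosity subsolution and $\underline{V}$ a viscosity supersolution of \eqref{G-option1}. Take a smooth test function $\psi$ such that $\overline{V}-\psi$ has a strict local maximum at $(t_0,X_0)$. Choose grid points at which $V-\psi$ is nearly maximal, shift $\psi$ by a constant, and use monotonicity (Definition \ref{def-mono}) to replace $V$ by $\psi$ in the scheme with the correct inequality. Consistency then passes to the limit, with the continuity of $q\mapsto\sup_\Sigma\Sigma^2q$ controlling the nonlinear term. The operator $F(q)=\frac12\sup_{\Sigma}\Sigma^2 q$ is nondecreasing in $V_{XX}$, so the equation is degenerate elliptic, and in fact uniformly parabolic since $\underline{\Sigma}>0$. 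The same test-function argument, in its boundary form, handles the initial data at $t=0$ and the boundary conditions at $X=\pm L$. These are interpreted in the generalized viscosity sense of \cite{BS2}: at $X=L$ the condition is Dirichlet, while at $X=-L$ it is the ODE $V_t=-rV$, which is itself a monotone first-order condition.

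The final step uses comparison to give $\overline{V}\le\underline{V}$. The reverse inequality holds by definition, so $\overline{V}=\underline{V}$ is the unique viscosity solution, and the convergence is locally uniform. The main obstacle is this strong comparison principle on the truncated domain with the nonstandard boundary conditions. For the interior, I would cite the standard comparison theory for uniformly parabolic HJB equations (Crandall--Ishii--Lions), since the Hamiltonian is Lipschitz in $(V,V_X,V_{XX})$ and independent of $X$. At the boundaries, I would argue that $X=-L$ is an outflow-type boundary for the ODE condition and that the Dirichlet data at $X=L$ are attained continuously, which follows from the compatibility of $\varphi$ with $\phi$. If this is delicate, I would restrict the convergence claim to compact subsets of $[0,T]\times(-L,L)$, as is common in the uncertain volatility literature \cite{Po}.
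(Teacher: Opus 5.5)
Your proposal takes the same route as the paper. The paper's proof is a one-line appeal to Barles--Souganidis \cite{BS2} via Lemmas \ref{Consistency}, \ref{mono11} and \ref{Stability}; your version is more careful, because it makes explicit the strong comparison principle that \cite{BS2} requires and that the paper never checks.

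One correction to that comparison step: in the $X$ variable the left end is not an outflow boundary. The mesh condition forces $\underline{\Sigma}>0$, so the operator is uniformly parabolic up to $X=-L$. The condition $V_t=-rV$ simply integrates to the Dirichlet datum $V(t,-L)=\phi(e^{-L})e^{-rt}$. Both ends are therefore Dirichlet, and comparison follows from standard barrier arguments, provided $\phi$ is continuous and $\varphi(0)=\phi(e^{L})$.
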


\begin{proof}
Since the scheme \eqref{explicit1} is consistent, $%
l_{\infty }-$stable, and monotone, the convergence follows from the results of Barles and Souganidis \cite{BS2} directly.
\end{proof}
%\begin{definition}
%\textrm{(Monotonicity)} \label{def-mono} The scheme \eqref{explicit1} is
%monotone if for all $i$,
%\begin{equation}
%g_{i}\left( U_{i}^{n}+\epsilon _{i}^{n},U_{i}^{n+1},\{U_{k}^{n+1}\}_{k\in
%N_{i}}\right) \leq g_{i}\left( U_{i}^{n},U_{i}^{n+1},\{U_{k}^{n+1}\}_{k\in
%N_{i}}\right) ,\ \ \forall \epsilon _{i}^{n}\geq 0,  \label{g-diag}
%\end{equation}%
%and
%\begin{equation}
%g_{i}\left( U_{i}^{n},U_{i}^{n+1}+\epsilon _{i}^{n+1},\{U_{k}^{n+1}+\epsilon
%_{k}^{n+1}\}_{k\in N_{i}}\right) \geq g_{i}\left(
%U_{i}^{n},U_{i}^{n+1},\{U_{k}^{n+1}\}_{k\in N_{i}}\right) ,\ \forall
%\epsilon _{i}^{n+1},\epsilon _{k}^{n+1}\geq 0.  \label{g-off-diag}
%\end{equation}
%\end{definition}
\begin{remark}
An explicit scheme for equation \eqref{G-op1} reads as%
\begin{equation*}
\left\{
\begin{array}{l}
\delta _{t}U_{i}^{n+1}+rS\delta _{h}U_{i}^{n+1}+\frac{1}{2}S^{2}\Sigma
^{2}\left( \delta _{h}^{2}U_{i}^{n+1}\right) \delta
_{h}^{2}U_{i}^{n+1}-rU_{i}^{n}=0, \\
U_{i}^{N}=\phi \left( S_{i}\right) .%
\end{array}%
\right.
\end{equation*}
It is easy to obtain that the condition for the explicit scheme to converge to the viscosity solution of the nonlinear PDE \eqref{G-op1} is
\begin{equation}
S_{\max }\overline{\Sigma}\sqrt{\Delta t}\leq h_{s}\leq \frac{S_{\min }\underline{%
\Sigma }^{2}}{r},
\end{equation}%
where
\begin{equation*}
S_{\max }=e^{L}, S_{\min }=e^{-L},
h_{s}=\frac{S_{\max }-S_{\min }}{M}.
\end{equation*}%
Then, the constraint for $\Delta t$ is obtained as%
\begin{equation}
\frac{S_{\max }^{2}\bar{\Sigma}^{2}M^{2}}{\left( S_{\max }-S_{\min }\right)
^{2}}\Delta t\leq 1.  \label{con1}
\end{equation}%
In contrast, when applying the logarithmic transformation with a fixed number of spatial nodes, the convergence condition established in Theorem \ref{Convergence1} yields
\begin{equation}
\overline{\Sigma }\sqrt{\Delta t}\leq h_{x}\leq \frac{2\underline{\Sigma }%
^{2}}{\max \left( 2r-\underline{\Sigma }^{2},\overline{\Sigma }%
^{2}-2r\right) },
\end{equation}%
where
\begin{equation*}
h_{x}=\frac{\ln \left( S_{\max }\right) -\ln \left( S_{\min }\right) }{M}.
\end{equation*}%
Then, the constraint for $\Delta t$ is obtained as
\begin{equation}
\frac{\bar{\Sigma}^{2}M^{2}}{\left( \ln \left( S_{\max }\right) -\ln \left(
S_{\min }\right) \right) ^{2}}\Delta t\leq 1.  \label{con2}
\end{equation}
Combining inequalities \eqref{con1} and \eqref{con2}, we find that the grid
ratio (for time discretization) of the numerical scheme with logarithmic
transformation is $\frac{S_{\max }^{2}\left( \ln \left( S_{\max }\right)
-\ln \left( S_{\min }\right) \right) ^{2}}{\left( S_{\max }-S_{\min }\right)
^{2}}$ times lower than that without the transformation. For notational simplicity and when the context is unambiguous, we shall denote $h_x$ as $h$ in the subsequent analysis.
\label{remark}
\end{remark}
\begin{theorem} \label{rate1}
\textrm{(Rate of convergence)} Let $v$ be the viscosity solution of equation \eqref{G-option1}, $V$ be the numerical solution of equation \eqref{explicit1}.
If condition $\overline{\Sigma }\sqrt{\Delta t}\leq h\leq \frac{2\underline{\Sigma }^{2}}{\max \left( 2r-\underline{\Sigma }^{2},\overline{\Sigma }^{2}-2r\right) }$ is satisfied and there exists some $\beta \in \left( 0,1\right)$, such that $v\in C^{1+\beta /2,2+\beta }\left( \left[0,T\right] \times \Omega\right)$, then
\begin{equation}
\max_n\left\vert \left\vert v^{n}-V^{n}\right\vert \right\vert
_{\infty }\leq C\left( \Delta t^{\frac{\beta }{2}}+h^{\beta }\right), \end{equation}
where $C$ is a positive constant independent of $\Delta t$ and $h$.
\end{theorem}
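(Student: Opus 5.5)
The plan is the classical truncation-error plus discrete maximum-principle argument. Under $v\in C^{1+\beta/2,2+\beta}$ the viscosity solution is classical, so it satisfies \eqref{G-option1} pointwise. Set $e_i^n=v_i^n-V_i^n$ with $v_i^n=v(t^n,X_i)$. Plugging $v$ into the explicit scheme \eqref{explicit1} leaves a local truncation error $\tau_i^n$:
\begin{equation*}
\delta_t v_i^{n+1}-r\delta_h v_i^n+rv_i^{n+1}-\tfrac12\sup_{\Sigma}\Sigma^2\left(\delta_h^2v_i^n-\delta_hv_i^n\right)=\tau_i^n .
\end{equation*}

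The first step is to bound $\tau_i^n$ in the H\"older setting. For this I use Taylor expansion with H\"older remainders, not full smoothness.
\begin{itemize}
\item Time difference: $\delta_t v_i^{n+1}-v_t(t^{n+1},X_i)=O(\Delta t^{\beta/2})$, since $v_t$ is $\beta/2$-H\"older in $t$.
\item Second difference: $\delta_h^2 v-v_{XX}=O(h^\beta)$, since $v_{XX}$ is $\beta$-H\"older in $X$.
\item Centered first difference: $\delta_h v-v_X=O(h)$, which is bounded by $O(h^\beta)$ for $h\le 1$.
\item Time shift: the spatial terms are evaluated at $t^n$ rather than $t^{n+1}$. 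This costs $O(\Delta t^{\beta/2})$ by the time H\"older regularity of $v_X$ and $v_{XX}$. I will make sure $v_X$ is at least $C^{1/2}$ in $t$ in the parabolic H\"older space.
\item Supremum: $a\mapsto\sup_\Sigma \Sigma^2 a/2$ is Lipschitz with constant $\overline\Sigma^2/2$, so the error of the nonlinear term is controlled by the errors of its linear arguments.
\end{itemize}
Together these give $\|\tau^n\|_\infty\le C_0(\Delta t^{\beta/2}+h^\beta)$, with $C_0$ depending only on the H\"older norm of $v$, $r$ and $\overline\Sigma$.

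Next I subtract the scheme satisfied by $V$ from the perturbed scheme satisfied by $v$. The nonlinear terms differ by
\[\tfrac12\left[\sup_\Sigma\Sigma^2 \mathbbm{I}_h v_i^n-\sup_\Sigma\Sigma^2 \mathbbm{I}_h V_i^n\right].\]
Because the supremum of linear functions is convex and monotone, this difference equals $\tfrac12\theta_i^n\,\mathbbm{I}_h e_i^n$ for some $\theta_i^n\in[\underline\Sigma^2,\overline\Sigma^2]$, via a mean-value argument for a piecewise linear function. Hence $e$ satisfies a linear explicit scheme with the same coefficient structure as \eqref{explicit2}, with $\Sigma^2$ replaced by $\theta_i^n$:
\begin{equation*}
(1+r\Delta t)e_i^{n+1}=a_{i}^n e_{i-1}^n+b_i^n e_i^n+c_i^n e_{i+1}^n+\Delta t\,\tau_i^n .
\end{equation*}
Under the mesh condition, exactly as in Lemma~\ref{mono11} and Lemma~\ref{Stability}, the coefficients $a,b,c$ are nonnegative and sum to $1$. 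The check holds for every $\theta\in[\underline\Sigma^2,\overline\Sigma^2]$: the conditions are linear in $\theta$, so it suffices to check the endpoints, which is what the stated condition ensures. The boundary equations carry $e=0$ at $X=L$ (exact Dirichlet data). At $X=-L$ the ODE $V_t=-rV$ gives a truncation error $O(\Delta t)$.

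Taking maxima then gives
\begin{equation*}
\|e^{n+1}\|_\infty\le\|e^n\|_\infty+\Delta t\,C_0(\Delta t^{\beta/2}+h^\beta).
\end{equation*}
Since $e^0=0$, summing over $n\le N=T/\Delta t$ yields $\max_n\|e^n\|_\infty\le TC_0(\Delta t^{\beta/2}+h^\beta)$, which is the claim with $C=TC_0$.

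The main obstacle is the linearization step. I have to justify that the sup-difference can be written with a single $\theta_i^n$ in the admissible range despite the switching in \eqref{b12}. The same issue appears in the truncation estimate, where the optimal $\Sigma$ for the discrete and continuous arguments may differ. I would first handle both with the elementary inequality
\[\min(\underline\Sigma^2a,\overline\Sigma^2a)\le \sup_\Sigma\Sigma^2(a+b)-\sup_\Sigma\Sigma^2 b\le\max(\underline\Sigma^2a,\overline\Sigma^2a),\]
which gives the required $\theta$ by continuity.
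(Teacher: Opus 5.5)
Your proof is correct. It shares the paper's skeleton: bound the truncation error of $v$ in the scheme, then run an $\ell_\infty$ maximum-principle recursion for the error. The difference is in how the nonlinearity is handled.

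\textbf{The paper's route.} The paper never linearizes. It sandwiches the residual of $v$ between $R^n_{low}$ and $R^n_{up}$ using the one-sided inequalities $\sup(f+g)\le\sup f+\sup g$ and $\sup(f+g)\ge\sup f+\inf g$. It then uses $\sup(f-g)\ge\sup f-\sup g$ to obtain a one-sided nonlinear inequality for $W=v-V$, and symmetrically one for $V-v$. Each direction is closed by appealing to the monotonicity/stability argument of Lemma~\ref{Stability} applied to the nonlinear operator.

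\textbf{Your route.} You write the difference of suprema exactly as $\tfrac12\theta_i^n(\delta_h^2e_i^n-\delta_he_i^n)$ with $\theta_i^n\in[\underline\Sigma^2,\overline\Sigma^2]$. The error then solves a linear explicit scheme with nonnegative weights summing to one, so a single two-sided estimate suffices. The cost is that positivity of the weights must hold for every $\theta$ in the interval, not just at the two switching values. Your remark that the conditions are affine in $\theta$ reduces this to the endpoints, which the stated mesh condition covers.

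\textbf{Bookkeeping.} Your version is also more careful in several places the paper glosses over:
\begin{itemize}
\item the lag between $t^n$, where the spatial differences sit, and $t^{n+1}$, where the time derivative and the $rV$ term sit;
\item the left-boundary row;
\item the factor $\Delta t$ multiplying the truncation error.
\end{itemize}
The last point matters. The paper's displayed recursion $\|W^{n+1}\|_\infty\le\frac{1}{1+r\Delta t}\|W^n\|_\infty+\|R^n_{up}\|_\infty$ omits that factor. Taken literally it only yields $N\|R_{up}\|_\infty$, which blows up as $\Delta t\to0$ and does not give the stated rate. Your recursion $\|e^{n+1}\|_\infty\le\|e^n\|_\infty+\Delta t\,C_0(\Delta t^{\beta/2}+h^\beta)$, summed to give $C=TC_0$, is the correct form.

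\textbf{What each approach buys.} The paper's one-sided argument works directly with the monotone nonlinear operator and needs no mean-value step. Yours makes the error equation explicit and delivers both bounds in one pass.
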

\begin{proof}
Approximating the derivatives by corresponding difference quotients in %
\eqref{G-option1}, we obtain
\begin{equation}
0=\delta _{t}v_{i}^{n+1}+R_{t}^{n}-r\left( \delta _{h}v_{i}^{n}+\widetilde{%
R_{h}}^{n}\right) -\frac{1}{2}\sup\limits_{_{\Sigma \in \left[
\underline{\Sigma },\overline{\Sigma }\right] }}\Sigma ^{2}\cdot\left(
\delta _{h}^{2}v_{i}^{n}+R_{h}^{n}-\left( \delta _{h}v_{i}^{n}+\widetilde{%
R_{h}}^{n}\right) \right) .
\end{equation}%
Here we have the truncation error terms
\begin{equation*}
R_{t}^{n}=v_{t}^{n}-\delta _{t}v_{i}^{n+1}=O((\Delta t)^{\beta /2}),%
\widetilde{R_{h}}^{n}=v_{x}^{n}-\delta _{h}v_{i}^{n}=O(h^{1+\beta
}),R_{h}^{n}=v_{xx}^{n}-\delta _{h}^{2}v_{i}^{n}=O(h^{\beta }).
\end{equation*}%
Thanks to $\sup (f+g)\leq \sup f+\sup g$, we have
\begin{equation}
\delta _{t}v_{i}^{n+1}-r\delta _{h}v_{i}^{n}+rv_{i}^{n+1}-\frac{1}{2}%
\sup\limits_{_{\Sigma \in \left[ \underline{\Sigma },\overline{%
\Sigma }\right] }}\Sigma ^{2} \cdot \left( \delta _{h}^{2}v_{i}^{n}-\delta
_{h}v_{i}^{n}\right) \leq R_{up}^{n},  \label{u-up}
\end{equation}%
where
\begin{equation}
R_{up}^{n}=-R_{t}^{n}+r\widetilde{R_{h}}^{n}+\frac{1}{2}\sup\limits_{_{%
\Sigma \in \left[ \underline{\Sigma },\overline{\Sigma }\right]
}} \Sigma ^{2} \cdot \left( R_{h}^{n}-\widetilde{R_{h}}^{n}\right) .
\end{equation}%
By the fact $\sup (f+g)\geq \sup f+\inf g$, we have
\begin{equation}
\delta _{t}v_{i}^{n+1}-r\delta _{h}v_{i}^{n}+rv_{i}^{n+1}-\frac{1}{2}%
\sup\limits_{_{\Sigma \in \left[ \underline{\Sigma },\overline{%
\Sigma }\right] }}\Sigma ^{2} \cdot\left( \delta _{h}^{2}v_{i}^{n}-\delta
_{h}v_{i}^{n}\right) \geq R_{low}^{n},  \label{u-low}
\end{equation}%
where
\begin{equation}
R_{low}^{n}=-R_{t}^{n}+r\widetilde{R_{h}}^{n}+\frac{1}{2}\inf\limits_{_{%
\Sigma \in \left[ \underline{\Sigma },\overline{\Sigma }\right]
}}\Sigma ^{2} \cdot\left( R_{h}^{n}-\widetilde{R_{h}}^{n}\right).
\end{equation}

Set $W_{i}^{n}=v_{i}^{n}-V_{i}^{n}$, then we have, by the fact $\sup
(f-g)\geq \sup f-\sup g$, that, for $0<i<M,$
\begin{eqnarray}
&&\delta _{t}W_{i}^{n+1}-r\delta _{h}W_{i}^{n}+rW_{i}^{n+1}-\frac{1}{2}%
\sup\limits_{_{\Sigma \in \left[ \underline{\Sigma },\overline{%
\Sigma }\right] }}\Sigma ^{2} \cdot \left( \delta _{h}^{2}W_{i}^{n}-\delta
_{h}W_{i}^{n}\right)   \notag \\
&=&\delta _{t}(v_{i}^{n+1}-V_{i}^{n+1})-r\delta
_{h}(v_{i}^{n}-V_{i}^{n})+r(v_{i}^{n+1}-V_{i}^{n+1})  \notag \\
&&\ \ -\frac{1}{2}\sup\limits_{_{\Sigma \in \left[ \underline{\Sigma }%
,\overline{\Sigma }\right] }}\Sigma ^{2} \cdot \left( \delta
_{h}^{2}(v_{i}^{n}-V_{i}^{n})-\delta _{h}(v_{i}^{n}-V_{i}^{n})\right)
\notag \\
&\leq &\delta _{t}v_{i}^{n+1}-r\delta _{h}v_{i}^{n}+rv_{i}^{n+1}-\frac{1}{2}%
\sup\limits_{_{\Sigma \in \left[ \underline{\Sigma },\overline{%
\Sigma }\right] }}\Sigma ^{2} \cdot \left( \delta _{h}^{2}v_{i}^{n}-\delta
_{h}v_{i}^{n}\right)   \notag \\
&&\ \ -\left( \delta _{t}V_{i}^{n+1}-r\delta _{h}V_{i}^{n}+rV_{i}^{n+1}-%
\frac{1}{2}\sup\limits_{_{\Sigma \in \left[ \underline{\Sigma },%
\overline{\Sigma } \right] }}\Sigma ^{2} \cdot \left( \delta
_{h}^{2}V_{i}^{n}-\delta _{h}V_{i}^{n}\right) \right)   \notag \\
&\leq &R_{up}^{n},
\end{eqnarray}%
where we have used \eqref{explicit1} and \eqref{u-up}. Proceeding as in the
proof of Lemma \ref{Stability}, we have
\begin{equation*}
\Vert W^{n+1}\Vert _{\infty }\leq \frac{1}{1+r\Delta t}\Vert W^{n}\Vert
_{\infty }+\Vert R_{up}^{n}\Vert _{\infty },
\end{equation*}%
with the initial condition $W_{i}^{0}=0$, which then results in
\begin{equation}
\Vert v^{n+1}-V^{n+1}\Vert _{\infty }\leq N\Vert R_{up}^{n}\Vert _{\infty }.
\label{rate-up}
\end{equation}

Similarly, let $\widetilde{W_{i}}^{n+1}=V_{i}^{n+1}-v_{i}^{n+1}$, then from %
\eqref{u-low}, we have
\begin{equation}
\Vert v^{n+1}-V^{n+1}\Vert _{\infty }\leq N\Vert R_{low}^{n}\Vert _{\infty }.
\label{rate-low}
\end{equation}%
Hence, we finally have
\begin{equation*}
\max_{n}\Vert v^{n}-V^{n}\Vert _{\infty }\leq N\ast \max_{n}(\Vert
R_{up}^{n}\Vert _{\infty }+\Vert R_{low}^{n}\Vert _{\infty })\leq C((\Delta
t)^{\beta /2}+h^{\beta }).
\end{equation*}
\end{proof}
\subsection{Numerical analysis for Implicit method}
The implicit scheme \eqref{iteration} leads to a nonlinear algebraic system which must be solved by
an inner iteration at each time step. In this section, we first prove the convergence of the inner iteration and then check the properties of consistence, stability, and monotonicity.
\subsubsection{ Convergence of inner iteration}
\begin{proposition}
\textrm{(Maximum principle)}\label{MP} If condition $h\leq \frac{2\underline{%
\Sigma }^{2}}{\max \left( 2r-\underline{\Sigma }^{2},\overline{\Sigma }%
^{2}-2r\right) }$ is satisfied, and $\{V_{i}^{n}\}$ satisfies
\begin{equation}
\mathbbm{L}_{h}V_{i}^{n}\equiv \delta _{t}V_{i}^{n}-r\delta
_{h}V_{i}^{n}+rV_{i}^{n}-\frac{\Sigma ^{2}}{2} \cdot \left( \delta
_{h}^{2}V_{i}^{n}-\delta _{h}V_{i}^{n}\right) \geq 0(\leq 0),\ n=1,...,N,\
0<i<M,  \label{linear eq}
\end{equation}%
where $\Sigma \in \left\{ \underline{\Sigma },\overline{\Sigma }%
\right\} .$ Then the minimum (maximum) of $\{V_{i}^{n}\}$ can only be
achieved at the initial or boundary points, unless $\{V_{i}^{n}\}$ is
constant.
\end{proposition}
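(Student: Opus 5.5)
Write the operator $\mathbbm{L}_{h}$ in nodal form. The proof is the classical discrete maximum principle argument by contradiction at an interior extremum. Multiplying \eqref{linear eq} by $\Delta t$, the inequality $\mathbbm{L}_{h}V_{i}^{n}\ge 0$ becomes
\begin{equation*}
\left(1+r\Delta t+\frac{\Sigma^{2}\Delta t}{h^{2}}\right)V_{i}^{n}-a\,V_{i-1}^{n}-b\,V_{i+1}^{n}\ \ge\ V_{i}^{n-1},
\end{equation*}
with
\begin{equation*}
a=\frac{\Delta t}{2h}\left(\frac{\Sigma^{2}}{h}+\frac{\Sigma^{2}}{2}-r\right),\qquad b=\frac{\Delta t}{2h}\left(\frac{\Sigma^{2}}{h}-\frac{\Sigma^{2}}{2}+r\right).
\end{equation*}
These are the same coefficients that appear in \eqref{explicit2}, now attached to the $(n)$-level neighbours.

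\textbf{Step 1: both off-diagonal coefficients are nonnegative.} Nonnegativity of $a$ and $b$ for both choices $\Sigma\in\{\underline{\Sigma},\overline{\Sigma}\}$ amounts to
\begin{equation*}
\Sigma^{2}\ge h\left(r-\tfrac{\Sigma^{2}}{2}\right),\qquad \Sigma^{2}\ge h\left(\tfrac{\Sigma^{2}}{2}-r\right).
\end{equation*}
The worst cases are $\Sigma=\underline{\Sigma}$ in the first inequality and $\Sigma=\overline{\Sigma}$ in the second. Because $\underline{\Sigma}\le\Sigma$, both are guaranteed by $h\le 2\underline{\Sigma}^{2}/\max(2r-\underline{\Sigma}^{2},\overline{\Sigma}^{2}-2r)$, exactly as in Lemma \ref{mono11}. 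Note also that
\begin{equation*}
a+b=\frac{\Sigma^{2}\Delta t}{h^{2}}.
\end{equation*}
I expect this sign bookkeeping to be the only real content. The step needing care is that the condition must cover whichever endpoint $\Sigma$ is selected at each node. It does, because the bound uses $\underline{\Sigma}^{2}$ in the numerator and the larger of the two possible defects in the denominator.

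\textbf{Step 2: contradiction at an interior minimum.} Suppose the minimum $m$ of $\{V_i^n\}$ is attained at an interior point $(n_0,i_0)$ with $n_0\ge1$ and $0<i_0<M$, and $V$ is not constant. At that point,
\begin{equation*}
(1+r\Delta t)V_{i_0}^{n_0}+a\bigl(V_{i_0}^{n_0}-V_{i_0-1}^{n_0}\bigr)+b\bigl(V_{i_0}^{n_0}-V_{i_0+1}^{n_0}\bigr)\ \ge\ V_{i_0}^{n_0-1}\ \ge\ m.
\end{equation*}
The bracketed differences are $\le 0$ and $a,b\ge0$, which gives $(1+r\Delta t)m\ge m$.

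The discount term $rV$ makes the sign of $m$ matter, so I would first treat the case $m\le 0$, which is the relevant one for the minimum. In that case every inequality in the chain must be an equality. This forces $V_{i_0}^{n_0-1}=m$, and $V_{i_0\pm1}^{n_0}=m$ wherever the corresponding coefficient is positive. I would note that $a+b>0$, so at least one neighbour is forced to equal $m$.

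\textbf{Step 3: propagation.} Repeating the argument from these new minimum points spreads the value $m$ backward in time and sideways in space until it reaches level $n=0$ or the boundary $i\in\{0,M\}$. Hence either the minimum is attained on the initial or boundary set, or $V$ is constant on the connected region, which is the alternative stated in the proposition.

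\textbf{Adjustments and the $\le0$ case.} If one coefficient, say $a$, vanishes (equality in the mesh condition), the propagation runs only in one spatial direction. It still reaches the boundary. For $m>0$, I would handle the zeroth-order term by the standard substitution $V=e^{-\lambda t}\tilde V$ or simply note that $r\ge0$ only helps. The maximum version (the $\le 0$ case) follows by applying the result to $-V$, since $\mathbbm{L}_{h}$ is linear for fixed $\Sigma$.
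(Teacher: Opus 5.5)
Your Step 1 and the identity $a+b=\Sigma^2\Delta t/h^2$ are the paper's argument written out node by node. The paper only observes that $A$ has nonpositive off-diagonal entries with $\sum_{j\neq i}|a_{ij}|=\Sigma^2/h^2<a_{ii}=\frac{1}{\Delta t}+r+\frac{\Sigma^2}{h^2}$, and declares that the proposition follows. Your treatment of a minimum $m\le 0$ is correct. For $m<0$ and $r>0$ the chain $m>(1+r\Delta t)m\ge\cdots\ge m$ is even an immediate contradiction.

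The gap is the case $m>0$, and it cannot be closed. There the chain only gives $(1+r\Delta t)m\ge V_{i_0}^{n_0-1}\ge m$, which forces no equality. For a positive minimum the discount term works against you, so ``$r\ge 0$ only helps'' is backwards. The substitution $\tilde V^n=(1+r\Delta t)^nV^n$, the discrete form of $e^{\lambda t}V$, does remove the $r$-term and yields a strong minimum principle for $\tilde V$. However, the minimum of $\tilde V$ need not sit where the minimum of $V$ does. It therefore only gives the weak bound $V_i^n\ge(1+r\Delta t)^{-n}\min_{\text{initial}\cup\text{boundary}}\tilde V$.

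In fact the proposition is false for positive minima when $r>0$. Take $V_i^n=1$ at all interior nodes ($n\ge 1$, $0<i<M$), $V_i^0=1+\frac12 r\Delta t$, and $V_0^n=V_M^n=1+\eta$ with $\eta=rh^2/(2\overline{\Sigma}^2)$. Using $a,b\ge 0$ you get $\Delta t\,\mathbbm{L}_hV_i^n\ge \frac12 r\Delta t-(a+b)\eta\ge 0$ at every interior node. Yet the minimum $1$ is attained only at interior points, and $V$ is not constant.

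What your argument, and the paper's M-matrix remark, actually proves is the sign-restricted version. A nonpositive minimum (nonnegative maximum) is attained on the initial/boundary set, i.e.\ $\min V\ge\min\bigl(0,\min_{\text{initial}\cup\text{boundary}}V\bigr)$. This is all that Theorem~\ref{CII2} (where $W^k$ has zero boundary data) and Lemma~\ref{S} use. You should state and prove that version rather than assert the $m>0$ case.

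Note also that your propagation forces $V\equiv m$ only on the levels $n\le n_0$, since it always runs back to $n=0$, not on the whole array. So ``constant on the connected region'' is weaker than the proposition's ``unless $\{V_i^n\}$ is constant''. The honest conclusion of your Step 3 is that the minimum is also attained at an initial point.
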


Re-formulate the right-hand side of system \eqref{linear eq} into an
operator form as
\begin{equation*}
\mathbbm{L}_{h}V^{n}=\left( \frac{1}{\Delta t}I-r\delta _{h}+r-\frac{\Sigma
^{2}}{2}\cdot\delta _{h}^{2}+\frac{\Sigma ^{2}}{2}\cdot\delta _{h}\right) V^{n}-\frac{1%
}{\Delta t}V^{n-1}\equiv AV^{n}-\frac{1}{\Delta t}V^{n-1}.
\end{equation*}%
It is easy to check that, $A=(a_{ij})$ is an M-matrix, that is, $%
a_{ii}>0,a_{ij}\leq 0$, for $i\not=j$, and $\sum_{j\not=i}|a_{ij}|<a_{ii}$.
Then the above proposition follows.
% \begin{assumption}
% \label{assume2}The constraint on the spatial step size is
% $h\leq \frac{2\underline{\Sigma }^{2}}{\max \left( 2r-%
% \underline{\Sigma }^{2},\overline{\Sigma }^{2}-2r\right) }$.
% \end{assumption}

\begin{theorem}
\label{CII2} \textrm{(Convergence of inner iteration)} If condition $h\leq
\frac{2\underline{\Sigma }^{2}}{\max \left( 2r-\underline{\Sigma }^{2},%
\overline{\Sigma }^{2}-2r\right) }$ is satisfied, then for any initial guess
$V^{n+1,0}$, the iterative sequence $\{V^{n+1,k}\}_{k>0}$ in %
\eqref{iteration} {iteration} is bounded and monotonically increasing, so
converges to the unique solution to \eqref{implicit1}.
\end{theorem}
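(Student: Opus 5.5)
We prove the claim by viewing \eqref{iteration} as a policy iteration. At each step a linear M-matrix system is solved, with the policy $\Sigma^{k}:=\Sigma(\mathbbm{I}_h V^{n+1,k})\in\{\underline{\Sigma},\overline{\Sigma}\}$ frozen nodewise.

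\textbf{Setup.} For a fixed policy vector $\Sigma=(\Sigma_i)$, write the $(n+1)$-th step as $A(\Sigma)V=b$. Here
\begin{equation*}
A(\Sigma)=\frac{1}{\Delta t}I-r\delta_h+r-\frac{\Sigma^2}{2}\mathbbm{I}_h
\end{equation*}
in the interior rows. The boundary rows are $(\frac{1}{\Delta t}+r)V_0=\frac{1}{\Delta t}V_0^n$ and $V_M=\varphi^{n+1}$. The right-hand side $b$ collects $\frac{1}{\Delta t}V^n$ and the boundary data; it is independent of $k$ and of $\Sigma$.

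The first step is to check that $A(\Sigma)$ is an M-matrix for \emph{every} admissible policy, exactly as in the remark after Proposition \ref{MP}. The interior off-diagonal entries are
\begin{equation*}
-\frac{1}{2h}\Big(\frac{\Sigma_i^2}{h}\pm\big(\tfrac{\Sigma_i^2}{2}-r\big)\Big).
\end{equation*}
They are nonpositive precisely under $h\le 2\underline{\Sigma}^2/\max(2r-\underline{\Sigma}^2,\overline{\Sigma}^2-2r)$. Strict diagonal dominance comes from the term $\frac{1}{\Delta t}+r$. Hence $A(\Sigma)^{-1}\ge 0$ entrywise, and by Proposition \ref{MP} the iteration is well defined. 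This verification, including the two boundary rows, is the main technical point. Everything else rests on $A(\Sigma)^{-1}\ge0$ holding uniformly in $\Sigma$.

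\textbf{Monotonicity.} Fix $k\ge1$, so that $V^{n+1,k}$ and $V^{n+1,k+1}$ have both been produced by the scheme. Subtracting the equations for $V^{n+1,k+1}$ and $V^{n+1,k}$ gives
\begin{equation*}
A(\Sigma^{k})\big(V^{n+1,k+1}-V^{n+1,k}\big)=A(\Sigma^{k-1})V^{n+1,k}-A(\Sigma^{k})V^{n+1,k}=\frac12\Big[(\Sigma^{k})^2-(\Sigma^{k-1})^2\Big]\mathbbm{I}_hV^{n+1,k}.
\end{equation*}
This holds in the interior rows; in the boundary rows the difference is zero. Since $\Sigma^k$ maximizes $\Sigma^2\,\mathbbm{I}_hV^{n+1,k}$ over $[\underline{\Sigma},\overline{\Sigma}]$ (see \eqref{b12}), the right-hand side is $\ge0$. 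Then $A(\Sigma^k)^{-1}\ge0$ yields $V^{n+1,k+1}\ge V^{n+1,k}$. So the sequence is monotonically increasing from $k=1$ on, whatever the initial guess $V^{n+1,0}$ was.

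\textbf{Boundedness.} Since each row of $A(\Sigma)$ is diagonally dominant with nonnegative row sum $\ge\frac{1}{\Delta t}$, the argument of Lemma \ref{Stability} gives
\begin{equation*}
\|V^{n+1,k}\|_\infty\le\max\big(\|V^n\|_\infty,|\varphi^{n+1}|\big)\quad\text{uniformly in }k.
\end{equation*}

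\textbf{Convergence and identification of the limit.} A bounded increasing sequence converges. Moreover, each iterate equals $A(\Sigma)^{-1}b$ for one of finitely many policies, namely $2^{M-1}$ of them. A monotone sequence taking values in a finite set is eventually constant. So there is a $K$ with $V^{n+1,K+1}=V^{n+1,K}=:V^*$. Then $A(\Sigma(\mathbbm{I}_hV^*))V^*=b$, i.e.\ $V^*$ solves \eqref{implicit1}. This bypasses the discontinuity of $\Sigma(\cdot)$ at $0$.

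\textbf{Uniqueness.} Let $U$ and $W$ both solve \eqref{implicit1}, and write $\Sigma_U=\Sigma(\mathbbm{I}_hU)$, $\Sigma_W=\Sigma(\mathbbm{I}_hW)$. Then
\begin{equation*}
A(\Sigma_U)(U-W)=A(\Sigma_W)W-A(\Sigma_U)W=\frac12\big[\Sigma_U^2-\Sigma_W^2\big]\mathbbm{I}_hW\le0,
\end{equation*}
because $\Sigma_W$ is the maximizer for $\mathbbm{I}_hW$. Since $A(\Sigma_U)^{-1}\ge0$, we get $U\le W$. Exchanging the roles of $U$ and $W$ gives $U=W$, which completes the proof.
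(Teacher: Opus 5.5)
Your proof is correct. Its core is the paper's argument written in matrix form. The paper also takes $W^k=V^{n+1,k+1}-V^{n+1,k}$ for $k\ge1$ and uses the fact that $\Sigma^k$ maximizes $\Sigma^2\,\mathbbm{I}_hV^{n+1,k}$ to get a nonnegative residual with zero boundary data. It then obtains $W^k\ge0$ and the uniform $l_\infty$ bound from the maximum principle (Proposition \ref{MP}), which is your inverse-positivity of $A(\Sigma^k)$.

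You go beyond the paper at the end. The paper stops at ``monotone and bounded, hence convergent'' and never checks that the limit satisfies \eqref{implicit1}, nor that the solution is unique, although the statement claims both.

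\begin{itemize}
\item Your finite-policy observation closes the first gap. Each iterate with $k\ge1$ equals $A(\Sigma)^{-1}b$ for one of $2^{M-1}$ policies, so a componentwise monotone sequence must be eventually constant. This also shows the iteration terminates exactly after finitely many steps, and it avoids the discontinuity of $\Sigma(\cdot)$ at $0$.
\item Your comparison argument, $A(\Sigma_U)(U-W)\le 0$ followed by swapping the roles of $U$ and $W$, closes the uniqueness gap.
\end{itemize}

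Two small imprecisions, neither of which affects the argument:
\begin{itemize}
\item The mesh condition is sufficient but not necessary for the off-diagonal entries to be nonpositive, so ``precisely under'' should read ``provided''.
\item The Dirichlet row $V_M=\varphi^{n+1}$ has row sum $1$, not $\ge 1/\Delta t$. This does not matter, because a maximum at $i=M$ is handled separately, exactly as in Lemma \ref{Stability}.
\end{itemize}
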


\begin{proof}
Denote by $\overline{V}^{k}=V^{n+1,k}$ and let $W^{k}=\overline{V}^{k+1}-%
\overline{V}^{k},$ $k\geq 1,$ then $W_{i}^{k}$ satisfies the following
difference equation
\begin{equation}
\left\{
\begin{array}{l}
\frac{W_{i}^{k}}{\Delta t}-r\delta _{h}W_{i}^{k}+rW_{i}^{k}-\frac{\Sigma
^{2}\left( \delta _{h}^{2}\overline{V}_{i}^{k}-\delta _{h}\overline{V}%
_{i}^{k}\right) }{2}\left( \delta _{h}^{2}\overline{V_{i}}^{k+1}-\delta _{h}%
\overline{V}_{i}^{k+1}\right)\\ +\frac{\Sigma ^{2}\left( \delta _{h}^{2}%
\overline{V}_{i}^{k-1}-\delta _{h}\overline{V}_{i}^{k-1}\right) }{2}\left(
\delta _{h}^{2}\overline{V}_{i}^{k}-\delta _{h}\overline{V}_{i}^{k}\right)
=0. \\
\left. W_{i}^{k}\right\vert _{_{X_{i}=-L}}=0, \\
\left. W_{i}^{k}\right\vert _{_{X_{i}=L}}=0.%
\end{array}%
\right.   \label{mono1}
\end{equation}%
Since $\frac{\Sigma ^{2}\left( \delta _{h}^{2}\overline{V}_{i}^{k}-\delta
_{h}\overline{V}_{i}^{k}\right) -\Sigma ^{2}\left( \delta _{h}^{2}\overline{V%
}_{i}^{k-1}-\delta _{h}\overline{V}_{i}^{k-1}\right) }{2}\left( \delta
_{h}^{2}\overline{V}_{i}^{k}-\delta _{h}\overline{V}_{i}^{k}\right) \geq 0,$
we have%
\begin{equation*}
\left\{
\begin{array}{l}
\frac{W_{i}^{k}}{\Delta t}-r\delta _{h}W_{i}^{k}+rW_{i}^{k}-\frac{\Sigma
^{2}\left( \delta _{h}^{2}\overline{V}_{i}^{k}-\delta _{h}\overline{V}%
_{i}^{k}\right) }{2}\left( \delta _{h}^{2}W_{i}^{k}-\delta
_{h}W_{i}^{k}\right) \geq 0, \\
W_{i}^{k}|_{_{X_{i}=-L}}=0, \\
W_{i}^{k}|_{_{X_{i}=L}}=0.%
\end{array}%
\right.
\end{equation*}

From the maximum principle, we have $W_{i}^{k}\geq 0$, for $0<i<M$, that is,
$\overline{V}_{i}^{k+1}\geq \overline{V},k\geq 1$. So $\{\overline{V}%
^{k}\}_{k>0}$ is a monotonic increasing sequence. Now we check the
boundedness of the sequence. 
From Proposition \ref{MP}, the maximum principle holds for the system \eqref{iteration}. Furthermore, it is easy to prove that the maximum value cannot be attained on the left boundary.
It follows that
\begin{equation}
\left\vert \left\vert \overline{V}^{k+1}\right\vert \right\vert _{\infty
}\leq \max \left(\left\vert \left\vert \phi \right\vert \right\vert
_{\infty },\underset{n}{\max }\left\vert \varphi ^{n}\right\vert \right).
\end{equation}%
Consequently, as a monotonic and bounded sequence, $\overline{V}^{k}$
converges.
\end{proof}
\subsubsection{Monotonicity and Convergence of implicit scheme \texorpdfstring{\eqref{implicit1}}{implicit scheme}}

\begin{lemma}
\label{Consistency1} \textrm{(Consistency) }The implicit scheme \eqref{implicit1} is consistent.
\end{lemma}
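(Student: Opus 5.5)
The plan is to argue exactly as in Lemma \ref{Consistency} for the explicit scheme, now evaluating every difference quotient at the new time level $t^{n+1}$. Take a smooth test function $\psi(t,X)$ and insert its grid values $\psi_i^n=\psi(t^n,X_i)$ into the left-hand side of \eqref{implicit1}. By Taylor expansion about $(t^{n+1},X_i)$ we get
\begin{equation*}
\delta_t\psi_i^{n+1}=\psi_t+O(\Delta t),\quad \delta_h\psi_i^{n+1}=\psi_X+O(h^2),\quad \delta_h^2\psi_i^{n+1}=\psi_{XX}+O(h^2),
\end{equation*}
with all derivatives evaluated at $(t^{n+1},X_i)$. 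The only nonlinearity is the term $-\frac12\sup_{\Sigma\in[\underline{\Sigma},\overline{\Sigma}]}\Sigma^2 q$ evaluated at $q=\delta_h^2\psi_i^{n+1}-\delta_h\psi_i^{n+1}$. Here I use the identity $\Sigma^2(q)\,q=\sup_\Sigma \Sigma^2 q$ from \eqref{b12}. The map $q\mapsto\sup_\Sigma\Sigma^2 q=\max(\overline{\Sigma}^2q,\underline{\Sigma}^2q)$ is Lipschitz with constant $\overline{\Sigma}^2$. So replacing the discrete $q$ by $\psi_{XX}-\psi_X$ costs only $O(h^2)$.

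Collecting terms, the interior residual is
\begin{equation*}
\Big(\psi_t-r\psi_X+r\psi-\tfrac12\sup_\Sigma\Sigma^2(\psi_{XX}-\psi_X)\Big)(t^{n+1},X_i)+O(\Delta t+h^2).
\end{equation*}
This tends to the PDE operator in \eqref{G-option1} as $h,\Delta t\to0$. The boundary conditions at $X=\pm L$ are handled the same way: $\delta_t\psi^{n+1}+r\psi^{n+1}\to\psi_t+r\psi$, and the Dirichlet condition is imposed exactly.

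For consistency in the Barles--Souganidis sense we need the residual to converge along perturbed points and shifted functions $\psi+\xi$ with $(t^{n+1},X_i,\xi)\to(t,X,0)$. This follows because the constant $\xi$ cancels in all difference quotients, appears only in $r\xi\to0$, and the Lipschitz nonlinearity is continuous. The one step needing care is this uniform continuity of the $\sup$ term, and the Lipschitz bound above settles it. No mesh condition is needed, in contrast with monotonicity.
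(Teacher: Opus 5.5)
Your proposal is correct and takes essentially the same approach as the paper. The paper simply defers to the explicit-scheme argument: the difference quotients tend to the derivatives and the $\sup$ operation is continuous. You make that argument explicit at level $t^{n+1}$, supplying the Taylor remainders, the Lipschitz bound $\overline{\Sigma}^2$ for $q\mapsto\max(\overline{\Sigma}^2 q,\underline{\Sigma}^2 q)$, and the Barles--Souganidis shift by a constant $\xi$.
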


\begin{proof}
The proof process is the same as that in Lemma \ref{Consistency}, and thus, it will not be reiterated here.
\end{proof}

\begin{lemma}
\label{lem-mono} \textrm{(Monotonicity)} If condition $h\leq \frac{2%
\underline{\Sigma }^{2}}{\max \left( 2r-\underline{\Sigma }^{2},\overline{%
\Sigma }^{2}-2r\right) }$ is satisfied, then the implicit scheme %
\eqref{implicit1} is monotone.
\end{lemma}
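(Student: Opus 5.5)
The plan is to verify the two inequalities of Definition \ref{def-mono} directly for the left-hand side $g_i$ of \eqref{implicit1}. As in the proof of Lemma \ref{mono11}, the $\sup$ will be handled through the bang-bang structure \eqref{b12}. The only change is that the nonlinear operator now acts on level $n+1$ rather than level $n$. Write $\mathbbm{I}_hV_i^{n+1}=\delta_h^2V_i^{n+1}-\delta_hV_i^{n+1}$. The scheme at an interior node is
\begin{equation*}
g_i=\frac{V_i^{n+1}-V_i^n}{\Delta t}-r\delta_hV_i^{n+1}+rV_i^{n+1}-\frac12\sup_{\Sigma\in[\underline{\Sigma},\overline{\Sigma}]}\Sigma^2\,\mathbbm{I}_hV_i^{n+1}.
\end{equation*}
The old level $V^n$ now enters only through $-V_i^n/\Delta t$, and the neighbours $V_{i\pm1}^n$ do not enter at all. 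Hence perturbing $V_i^n$ by $\epsilon_i^n\ge0$ changes $g_i$ by $-\epsilon_i^n/\Delta t\le0$. This is why no CFL-type restriction linking $\Delta t$ and $h$ should appear.

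For the diagonal perturbation $\widetilde V_i^{n+1}=V_i^{n+1}+\epsilon_i^{n+1}$, note that $\mathbbm{I}_h$ decreases by $2\epsilon_i^{n+1}/h^2$. Since $x\mapsto\sup_\Sigma\Sigma^2x/2$ is nondecreasing, the term $-\sup(\cdot)$ does not decrease. Together with $(1/\Delta t+r)\epsilon_i^{n+1}\ge0$, this gives \eqref{g-diag}. To be explicit, I will use the same add-and-subtract trick as in Lemma \ref{mono11}. Set $\widetilde\Sigma=\Sigma(\mathbbm{I}_h\widetilde V_i^{n+1})$ and $\widehat\Sigma=\Sigma(\mathbbm{I}_hV_i^{n+1})$, and use $\widetilde\Sigma^2\,\mathbbm{I}_h\widetilde V\ge\widehat\Sigma^2\,\mathbbm{I}_h\widetilde V$. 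This bounds the difference from below by $\bigl(\frac1{\Delta t}+r+\frac{\widehat\Sigma^2}{h^2}\bigr)\epsilon_i^{n+1}\ge0$.

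For the off-diagonal perturbation $\widetilde V_{i\pm1}^{n+1}=V_{i\pm1}^{n+1}+\epsilon_{i\pm1}^{n+1}$, the same trick (comparing against $\widehat\Sigma$ via the maximality of $\widetilde\Sigma$) gives
\begin{equation*}
T\le-\frac{\epsilon_i^n}{\Delta t}-\Bigl(\frac{r}{2h}+\frac{\widehat\Sigma^2}{2h^2}-\frac{\widehat\Sigma^2}{4h}\Bigr)\epsilon_{i+1}^{n+1}-\Bigl(\frac{\widehat\Sigma^2}{2h^2}-\frac{r}{2h}+\frac{\widehat\Sigma^2}{4h}\Bigr)\epsilon_{i-1}^{n+1}.
\end{equation*}
It therefore remains to show that both brackets are nonnegative for $\widehat\Sigma\in\{\underline\Sigma,\overline\Sigma\}$, which is a sign check on each bracket. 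These are exactly the off-diagonal M-matrix sign conditions already used for Proposition \ref{MP}. Multiplying by $4h^2$, they read $2\widehat\Sigma^2+h(2r-\widehat\Sigma^2)\ge0$ and $2\widehat\Sigma^2-h(2r-\widehat\Sigma^2)\ge0$, i.e. $h\,|2r-\widehat\Sigma^2|\le2\widehat\Sigma^2$. I will verify these under $h\le 2\underline\Sigma^2/\max(2r-\underline\Sigma^2,\overline\Sigma^2-2r)$, splitting into cases according to the sign of $2r-\widehat\Sigma^2$. When $2r-\widehat\Sigma^2\ge0$ we have $2r-\widehat\Sigma^2\le2r-\underline\Sigma^2$. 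When it is negative we have $\widehat\Sigma^2-2r\le\overline\Sigma^2-2r$. In both cases $2\underline\Sigma^2\le2\widehat\Sigma^2$ finishes the bound. The hypothesis involves only $h$, so the conclusion is unconditional in $\Delta t$.

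The main obstacle is the careful handling of the switching $\Sigma$ when a perturbation changes the sign of $\mathbbm{I}_h$. The add-and-subtract argument resolves this, since only the inequality $\sup_\Sigma\Sigma^2x\ge\widehat\Sigma^2x$ is needed and no continuity of the switch. The secondary point is getting the case analysis on $2r-\widehat\Sigma^2$ right, so that the single stated bound on $h$ covers both the $\underline\Sigma$ and the $\overline\Sigma$ branch.
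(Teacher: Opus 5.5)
Your proposal is correct and follows essentially the paper's route. You check Definition~\ref{def-mono} directly, using the add-and-subtract trick around the bang-bang maximizer, and this produces exactly the paper's bound on $T'$. You also carry out the sign check $h\,|2r-\widehat\Sigma^2|\le 2\widehat\Sigma^2$, which the paper only asserts.

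There is one slip, in your ``to be explicit'' diagonal step. The inequality $\widetilde\Sigma^2\,\mathbbm{I}_h\widetilde V_i^{n+1}\ge\widehat\Sigma^2\,\mathbbm{I}_h\widetilde V_i^{n+1}$ enters the difference with a minus sign, so it gives an \emph{upper} bound. The claimed lower bound $\bigl(\frac{1}{\Delta t}+r+\frac{\widehat\Sigma^2}{h^2}\bigr)\epsilon_i^{n+1}$ is false in general: take $\mathbbm{I}_hV_i^{n+1}=0$, so that $\widehat\Sigma=\overline\Sigma$ but $\widetilde\Sigma=\underline\Sigma$. The correct explicit version uses the maximality of $\widehat\Sigma$ at the unperturbed argument, namely $\widehat\Sigma^2\,\mathbbm{I}_hV_i^{n+1}\ge\widetilde\Sigma^2\,\mathbbm{I}_hV_i^{n+1}$. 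This gives $\bigl(\frac{1}{\Delta t}+r+\frac{\widetilde\Sigma^2}{h^2}\bigr)\epsilon_i^{n+1}\ge 0$.

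The slip is harmless, because your preceding monotonicity-of-$\sup$ argument already proves \eqref{g-diag}. The paper's written justification at this point contains the same mix-up.
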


\begin{proof}
Proceeding as in the proof of Lemma \ref{mono11}, we first consider the perturbation
on $V_{i}^{n+1}$, and denote it by $\widetilde{V}_{i}^{n+1}=V_{i}^{n+1}+$ $%
\epsilon _{i}^{n},$ for $\epsilon _{i}^{n}\geq 0.$ We also denote by $%
\widetilde{V}_{k}^{n+1}=V_{k}^{n+1}+\epsilon _{k}^{n+1}$, for $\epsilon
_{k}^{n+1}\geq 0$ and $k\in N_{i}$. Then the difference between the two
sides of the inequality \eqref{g-diag} is
\begin{eqnarray*}
T&:= &g_{i}\left( \widetilde{V}_{i}^{n+1},\{\widetilde{V}_{k}^{n+1}\}_{k\in
N_{i}},V_{i}^{n}\right) -g_{i}\left( V_{i}^{n+1},\{V_{k}^{n+1}\}_{k\in
N_{i}},V_{i}^{n}\right)  \\
&=&\left( \frac{1}{\Delta t}+r\right) \epsilon _{i}^{n+1}-\frac{\widetilde{%
\Sigma }^{2}}{2} \cdot \left( \delta _{h}^{2}\widetilde{V}_{i}^{n+1}-\delta _{h}%
\widetilde{V}_{i}^{n+1}\right) +\frac{\widehat{\Sigma }^{2}}{2} \cdot \left( \delta
_{h}^{2}V_{i}^{n+1}-\delta _{h}V_{i}^{n+1}\right)  \\
&=&\left( \frac{1}{\Delta t}+r\right) \epsilon _{i}^{n+1}-\frac{\widetilde{%
\Sigma }^{2}}{2} \cdot \left( \delta _{h}^{2}\widetilde{V}_{i}^{n+1}-\delta _{h}%
\widetilde{V}_{i}^{n+1}\right) +\frac{\widetilde{\Sigma }^{2}}{2} \cdot \left(
\delta _{h}^{2}V_{i}^{n+1}-\delta _{h}V_{i}^{n+1}\right)  \\
&&+\frac{\widehat{\Sigma }^{2}}{2} \cdot \left( \delta _{h}^{2}V_{i}^{n+1}-\delta
_{h}V_{i}^{n+1}\right) -\frac{\widetilde{\Sigma }^{2}}{2} \cdot \left( \delta
_{h}^{2}V_{i}^{n+1}-\delta _{h}V_{i}^{n+1}\right)  \\
&\geq &\left( \frac{1}{\Delta t}+r+\frac{\Sigma ^{2}}{h^{2}}\right) \epsilon
_{i}^{n+1}\geq 0,
\end{eqnarray*}
since $\frac{\widetilde{\Sigma }^{2}}{2} \cdot \left( \delta _{h}^{2}\widetilde{V}%
_{i}^{n+1}-\delta _{h}\widetilde{V}_{i}^{n+1}\right) \geq \frac{\widehat{%
\Sigma }^{2}}{2} \cdot \left( \delta _{h}^{2}\widetilde{V}_{i}^{n+1}-\delta _{h}%
\widetilde{V}_{i}^{n+1}\right) ,$ where
\begin{equation*}
\widetilde{\Sigma }=\Sigma \left( \delta _{h}^{2}\widetilde{V}%
_{i}^{n+1}-\delta _{h}\widetilde{V}_{i}^{n+1}\right) ,\text{ }\widehat{%
\Sigma }=\Sigma \left( \delta _{h}^{2}V_{i}^{n+1}-\delta
_{h}V_{i}^{n+1}\right) .
\end{equation*}%
We now turn to the perturbation on $\{V_{k}^{n+1}\}_{k\in N_{i}}$ and $%
V_{i}^{n}.$ Then the difference between the two sides of the inequality %
\eqref{g-off-diag} is
\begin{eqnarray*}
T'&:= &g_{i}\left( V_{i}^{n+1},\{\widetilde{V}_{k}^{n+1}\}_{k\in N_{i}},%
\widetilde{V}_{i}^{n}\right) -g_{i}\left( V_{i}^{n+1},\{V_{k}^{n+1}\}_{k\in
N_{i}},V_{i}^{n}\right)   \notag \\
&=&-\frac{\epsilon _{i}^{n}}{\Delta t}-r\delta _{h}\widetilde{V}%
_{i}^{n+1}+r\delta _{h}V_{i}^{n+1}-\frac{\widetilde{\Sigma }^{2}}{2} \cdot \left(
\delta _{h}^{2}\widetilde{V}_{i}^{n+1}-\delta _{h}\widetilde{V}%
_{i}^{n+1}\right) +\frac{\widehat{\Sigma }^{2}}{2} \cdot \left( \delta
_{h}^{2}V_{i}^{n+1}-\delta _{h}V_{i}^{n+1}\right)   \notag \\
&=&-\frac{\epsilon _{i}^{n}}{\Delta t}-r\delta _{h}\widetilde{V}%
_{i}^{n+1}+r\delta _{h}V_{i}^{n+1}-\frac{\widetilde{\Sigma }^{2}}{2} \cdot \left(
\delta _{h}^{2}\widetilde{V}_{i}^{n+1}-\delta _{h}\widetilde{V}%
_{i}^{n+1}\right) +\frac{\widehat{\Sigma }^{2}}{2} \cdot \left( \delta _{h}^{2}%
\widetilde{V}_{i}^{n+1}-\delta _{h}\widetilde{V}_{i}^{n+1}\right)   \notag \\
&&+\frac{\widehat{\Sigma }^{2}}{2} \cdot \left( \delta _{h}^{2}V_{i}^{n+1}-\delta
_{h}V_{i}^{n+1}\right) -\frac{\widehat{\Sigma }^{2}}{2} \cdot \left( \delta _{h}^{2}%
\widetilde{V}_{i}^{n+1}-\delta _{h}\widetilde{V}_{i}^{n+1}\right)   \notag \\
&\leq &-\frac{1}{\Delta t}\epsilon _{i}^{n}-\left( \frac{r}{2h}+\frac{%
\widehat{\Sigma }^{2}}{2h^{2}}-\frac{\widehat{\Sigma }^{2}}{4h}\right)
\epsilon _{i+1}^{n+1}-\left( \frac{\widehat{\Sigma }^{2}}{2h^{2}}-\frac{r}{2h%
}+\frac{\widehat{\Sigma }^{2}}{4h}\right) \epsilon _{i-1}^{n+1}.
\end{eqnarray*}%
If condition $h\leq \frac{2\underline{\Sigma }^{2}}{\max \left( 2r-%
\underline{\Sigma }^{2},\overline{\Sigma }^{2}-2r\right) }$ is satisfied,
then we obtain $T'\leq 0.$ The monotonicity of scheme \eqref{explicit1} now
follows directly from Definition \ref{def-mono}.
\end{proof}

\begin{lemma}
\label{S} \textrm{(Stability)} If condition $h\leq \frac{2\underline{\Sigma }%
^{2}}{\max \left( 2r-\underline{\Sigma }^{2},\overline{\Sigma }%
^{2}-2r\right) }$ is satisfied, then the fully implicit scheme %
\eqref{implicit1} is stable, in the sense that
\begin{equation}
\underset{n}{\max }\left\vert \left\vert V^{n}\right\vert \right\vert
_{\infty }\leq \max \left( \left\vert \left\vert \phi \right\vert
\right\vert _{\infty },\underset{n}{\max }\left\vert \varphi ^{n}\right\vert
\right) .  \label{uncondition2}
\end{equation}
\end{lemma}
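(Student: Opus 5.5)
We argue by a discrete maximum principle at each time level, exactly as in the proof of Lemma \ref{Stability}. The difference is that here the spatial neighbours are taken at the new time level $t^{n+1}$, so the mesh condition only has to control the sign of the off-diagonal coefficients. No restriction on $\Delta t$ is needed.

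\textbf{Step 1 (frozen coefficients).} Fix $n$ and let $V^{n+1}$ be the solution of \eqref{implicit1}; it exists by Theorem \ref{CII2}. Set $\Sigma_i:=\Sigma(\delta_h^2V_i^{n+1}-\delta_hV_i^{n+1})\in\{\underline{\Sigma},\overline{\Sigma}\}$. With these values frozen, the interior equations become linear:
\begin{equation*}
\Big(\frac{1}{\Delta t}+r+\frac{\Sigma_i^2}{h^2}\Big)V_i^{n+1}
-a_i V_{i+1}^{n+1}-b_i V_{i-1}^{n+1}=\frac{1}{\Delta t}V_i^n ,
\end{equation*}
where
\begin{equation*}
a_i=\frac{r}{2h}+\frac{\Sigma_i^2}{2h^2}-\frac{\Sigma_i^2}{4h},\qquad
b_i=\frac{\Sigma_i^2}{2h^2}-\frac{r}{2h}+\frac{\Sigma_i^2}{4h}.
\end{equation*}

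\textbf{Step 2 (sign of the coefficients).} Multiplying by $4h^2$, the condition $a_i\ge0$ reads $h(\Sigma_i^2-2r)\le 2\Sigma_i^2$. The condition $b_i\ge0$ reads $h(2r-\Sigma_i^2)\le 2\Sigma_i^2$. Both follow from $h\le 2\underline{\Sigma}^2/\max(2r-\underline{\Sigma}^2,\overline{\Sigma}^2-2r)$: the worst case of the first is $\Sigma_i=\overline{\Sigma}$ (together with $\overline{\Sigma}^2\ge\underline{\Sigma}^2$), and the worst case of the second is $\Sigma_i=\underline{\Sigma}$. This is the same verification already used in Lemma \ref{lem-mono}. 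Moreover $a_i+b_i=\Sigma_i^2/h^2$, so the diagonal coefficient equals $\frac{1}{\Delta t}+r+a_i+b_i$.

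\textbf{Step 3 (maximum principle).} Take $i_0$ with $|V_{i_0}^{n+1}|=\|V^{n+1}\|_\infty$.
\begin{itemize}
\item If $i_0=M$, then $\|V^{n+1}\|_\infty=|\varphi^{n+1}|$.
\item If $i_0=0$, the boundary relation gives $V_0^{n+1}=V_0^n/(1+r\Delta t)$, hence $\|V^{n+1}\|_\infty\le\|V^n\|_\infty$.
\item If $0<i_0<M$, take absolute values in the frozen equation and use $a_i,b_i\ge0$:
\begin{equation*}
\Big(\frac{1}{\Delta t}+r+a_{i_0}+b_{i_0}\Big)\|V^{n+1}\|_\infty\le (a_{i_0}+b_{i_0})\|V^{n+1}\|_\infty+\frac{1}{\Delta t}\|V^n\|_\infty ,
\end{equation*}
so $\|V^{n+1}\|_\infty\le\frac{1}{1+r\Delta t}\|V^n\|_\infty$.
\end{itemize}
In every case $\|V^{n+1}\|_\infty\le\max(\|V^n\|_\infty,|\varphi^{n+1}|)$. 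Induction on $n$ from $V^0=\phi(e^{X})$ then gives the bound \eqref{uncondition2}.

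\textbf{Main obstacle.} The only delicate point is that the coefficients depend on the unknown $V^{n+1}$ itself. This is handled by freezing $\Sigma_i$ at the actual solution, which is legitimate because $\Sigma_i$ takes only the two values $\underline{\Sigma},\overline{\Sigma}$. The nonnegativity argument of Step 2 must therefore cover both values, which is exactly why the $\max$ appears in the mesh condition.
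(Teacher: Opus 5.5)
Your proposal is correct and follows essentially the same route as the paper: a discrete maximum principle at each time level. It rests on the diagonal dominance (nonnegative off-diagonal coefficients) guaranteed by the mesh condition for either value of $\Sigma$, together with a separate treatment of the boundary nodes. You write out explicitly what the paper's one-line proof only asserts: the frozen $\Sigma_i$, the sign checks for $a_i,b_i$, and the left-boundary relation $V_0^{n+1}=V_0^n/(1+r\Delta t)$. That relation also replaces, more cleanly, the paper's claim that the maximum cannot be attained at the left boundary.
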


\begin{proof}
With the diagonal-dominated assumption, the maximum principle holds for
the system \eqref{implicit1}. Moreover, the maximum value cannot be achieved at the left boundary. The estimate on $l_{\infty }$ %
\eqref{uncondition2} follows directly.
\end{proof}

\begin{theorem}
\textrm{(Convergence to the viscosity solution) }If condition $h\leq \frac{2%
\underline{\Sigma }^{2}}{\max \left( 2r-\underline{\Sigma }^{2},\overline{%
\Sigma }^{2}-2r\right) }$ is satisfied, then the implicit scheme %
\eqref{implicit1} converges to the viscosity solution of the nonlinear PDE %
\eqref{G-option1}. \label{Convergence2}\
\end{theorem}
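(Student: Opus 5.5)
The proof will follow the Barles--Souganidis framework \cite{BS2}, exactly as in Theorem \ref{Convergence1} for the explicit scheme. The plan is to assemble the three required properties for the implicit scheme \eqref{implicit1} under the single mesh condition $h\leq \frac{2\underline{\Sigma}^{2}}{\max(2r-\underline{\Sigma}^{2},\overline{\Sigma}^{2}-2r)}$, which, unlike the explicit case, involves no restriction linking $\Delta t$ and $h$.

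First, we need the scheme to be well defined: at every time level the nonlinear system \eqref{implicit1} must have a unique solution $V^{n+1}$. This comes from Theorem \ref{CII2}, which shows that under the stated condition the Picard iteration \eqref{iteration} is monotone and bounded and converges to the unique solution. Consequently the discrete operator $g_i$ determines a single grid function $V^{n}$ for each $(h,\Delta t)$, to which the Barles--Souganidis theory applies.

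Second, we collect the three properties.
\begin{itemize}
\item Consistency is Lemma \ref{Consistency1}. The difference quotients $\delta_t,\delta_h,\delta_h^2$ applied to smooth test functions converge to $\partial_t,\partial_X,\partial_{XX}$. The map $p\mapsto \sup_{\Sigma}\Sigma^2 p$ is continuous (indeed Lipschitz), so the discrete operator converges to the PDE operator of \eqref{G-option1}.
\item Monotonicity is Lemma \ref{lem-mono}. Increasing the diagonal unknown increases $g_i$, while increasing the neighbours $V_{i\pm1}^{n+1}$ or the previous-level value $V_i^n$ decreases $g_i$. The sign conditions on the off-diagonal coefficients $\frac{\Sigma^2}{2h^2}\pm\frac{r}{2h}\mp\frac{\Sigma^2}{4h}$ for $\Sigma\in\{\underline{\Sigma},\overline{\Sigma}\}$ are exactly what the mesh condition guarantees.
\item $l_\infty$-stability is Lemma \ref{S}: $\max_n\|V^n\|_\infty\le\max(\|\phi\|_\infty,\max_n|\varphi^n|)$, uniformly in $h$ and $\Delta t$.
\end{itemize}

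Third, we invoke the argument of \cite{BS2}. Define the relaxed half-limits $\overline{v}=\limsup^{*}V$ and $\underline{v}=\liminf_{*}V$ as $(h,\Delta t)\to0$; these are finite by stability. Monotonicity and consistency together show that $\overline{v}$ is a viscosity subsolution and $\underline{v}$ a viscosity supersolution of \eqref{G-option1}, including at the initial and boundary data. A strong comparison principle for \eqref{G-option1} then gives $\overline{v}\le\underline{v}$. Since $\underline{v}\le\overline{v}$ trivially, the two coincide with the unique viscosity solution, and $V$ converges to it locally uniformly.

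The main obstacle is this comparison step, which the Barles--Souganidis theorem presupposes. The operator $\sup_\Sigma \Sigma^2(V_{XX}-V_X)$ is degenerate-elliptic, convex and uniformly parabolic because $\underline{\Sigma}>0$, so interior comparison is standard. The delicate part is the boundary condition at $X=-L$, which is the ODE $V_t=-rV$, together with the Dirichlet data at $X=L$. I would handle these by interpreting the boundary conditions in the generalized (viscosity) sense and citing comparison results for Bellman equations on bounded domains. Alternatively, one can show directly that the discrete solution matches the boundary data, so that the half-limits attain the boundary values and classical comparison applies.
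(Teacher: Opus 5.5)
Your proposal is correct and follows the same route as the paper, which defers to the proof of Theorem \ref{Convergence1}: consistency (Lemma \ref{Consistency1}), monotonicity (Lemma \ref{lem-mono}) and $l_\infty$-stability (Lemma \ref{S}) under the mesh condition, followed by a direct appeal to Barles--Souganidis. Your two additional points are the well-posedness of the nonlinear system \eqref{implicit1} via Theorem \ref{CII2}, and the strong comparison principle with the boundary conditions at $X=\pm L$ taken in the generalized sense. The paper does not address either, but the Barles--Souganidis theorem genuinely requires both, so they make your argument more complete rather than a different proof.
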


\begin{proof}
The proof process is the same as that in Theorem \ref{Convergence1}, and thus, it will not be reiterated here.
\end{proof}

\begin{theorem} \label{rate2}
\textrm{(Rate of convergence)} Let $v$ be the viscosity solution of equation \eqref{G-option1}, $V$ be the numerical solution of equation \eqref{implicit1}.
If condition $h\leq \frac{2\underline{\Sigma }%
^{2}}{\max \left( 2r-\underline{\Sigma }^{2},\overline{\Sigma }%
^{2}-2r\right) }$ is satisfied and there exists some $\beta \in \left( 0,1\right)$, such that $v\in C^{1+\beta /2,2+\beta }\left( \left[0,T\right] \times \Omega\right)$, then
\begin{equation}
\max_n\left\vert \left\vert v^{n}-V^{n}\right\vert \right\vert
_{\infty }\leq C\left( \Delta t^{\frac{\beta }{2}}+h^{\beta }\right), \end{equation}
where $C$ is a positive constant independent of $\Delta t$ and $h$.
\end{theorem}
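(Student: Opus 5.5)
We mirror the argument for Theorem \ref{rate1}, with every spatial difference quotient now evaluated at the new time level $n+1$. The explicit proof needed the CFL-type lower bound $\overline{\Sigma}\sqrt{\Delta t}\le h$; here the implicit time level and the M-matrix structure of Proposition \ref{MP} take its place. That is why only the upper bound on $h$ appears in the hypothesis.

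\textbf{Step 1 (truncation errors).} We insert the exact solution $v\in C^{1+\beta/2,2+\beta}$ into the implicit operator at the grid points $(t^{n+1},X_i)$ and write
\begin{equation*}
v_t^{n+1}=\delta_t v_i^{n+1}+R_t^{n+1},\quad v_x^{n+1}=\delta_h v_i^{n+1}+\widetilde{R_h}^{n+1},\quad v_{xx}^{n+1}=\delta_h^2 v_i^{n+1}+R_h^{n+1}.
\end{equation*}
The H\"older regularity gives $R_t=O(\Delta t^{\beta/2})$, $\widetilde{R_h}=O(h^{1+\beta})$ and $R_h=O(h^\beta)$. We then use $\sup(f+g)\le \sup f+\sup g$ and $\sup(f+g)\ge \sup f+\inf g$, exactly as in \eqref{u-up}--\eqref{u-low}. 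This yields two-sided bounds
\begin{equation*}
R_{low}^{n+1}\le \delta_t v_i^{n+1}-r\delta_h v_i^{n+1}+rv_i^{n+1}-\tfrac12\sup_{\Sigma}\Sigma^2\big(\delta_h^2v_i^{n+1}-\delta_hv_i^{n+1}\big)\le R_{up}^{n+1},
\end{equation*}
with $\|R_{up}\|_\infty,\|R_{low}\|_\infty\le C(\Delta t^{\beta/2}+h^\beta)$.

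\textbf{Step 2 (error equation).} Set $W_i^n=v_i^n-V_i^n$, where $V$ solves \eqref{implicit1}; the limit exists by Theorem \ref{CII2}. Subtracting \eqref{implicit1} from the upper bound and using $\sup f-\sup g\le \sup(f-g)$ would only produce a sup-type inequality for $W$. Instead, we freeze the coefficient. Let $\Sigma^*$ be the maximizer for $v$ at node $i$. Since the sup for $V$ is at least its value at $\Sigma^*$, we get the linear inequality
\begin{equation*}
\delta_tW_i^{n+1}-r\delta_hW_i^{n+1}+rW_i^{n+1}-\tfrac{(\Sigma^*)^2}{2}\big(\delta_h^2W_i^{n+1}-\delta_hW_i^{n+1}\big)\le R_{up}^{n+1}.
\end{equation*}
Choosing instead the maximizer for $V$ gives the reverse inequality with $R_{low}^{n+1}$. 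The boundary data are exact at $X=L$, and the ODE at $X=-L$ is discretized identically for $v$ and $V$ (up to an $O(\Delta t^{\beta/2})$ term). So $W$ satisfies a linear implicit scheme with node-dependent $\Sigma_i\in\{\underline{\Sigma},\overline{\Sigma}\}$, zero initial data, and controlled forcing.

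\textbf{Step 3 (discrete maximum principle and summation).} Writing the system as $AW^{n+1}=\frac{1}{\Delta t}W^n+F^{n+1}$, the condition on $h$ makes $A$ an M-matrix, as in Proposition \ref{MP}. Its diagonal is $\frac1{\Delta t}+r+\frac{\Sigma_i^2}{h^2}$, it has nonpositive off-diagonals, and its row-sum excess is $\frac{1}{\Delta t}+r$. We evaluate the equation at an index $i_0$ where $|W_i^{n+1}|$ is maximal, handling the left boundary as in Lemma \ref{S}. This gives
\begin{equation*}
\Big(\tfrac1{\Delta t}+r\Big)\|W^{n+1}\|_\infty\le \tfrac1{\Delta t}\|W^n\|_\infty+\max(\|R_{up}\|_\infty,\|R_{low}\|_\infty).
\end{equation*}
Hence $\|W^{n+1}\|_\infty\le \|W^n\|_\infty+\Delta t\,C(\Delta t^{\beta/2}+h^\beta)$. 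Summing over $n\le N$ with $N\Delta t=T$ gives the claim with $C$ proportional to $T$; this is in fact slightly sharper than the factor $N$ in the explicit proof.

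\textbf{Main obstacle.} The delicate point is Step 2. One must make sure that the sign manipulation of the nonlinear sup term produces a \emph{linear} operator of the M-matrix type covered by Proposition \ref{MP}, rather than merely an inequality involving $\sup$. The frozen-coefficient choice above, which differs for the upper and lower estimates, handles this. One must also check that the boundary node $X=-L$ does not spoil the maximum principle, since it has no spatial coupling.
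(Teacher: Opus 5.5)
Your proposal is correct and follows the route the paper indicates; the paper's proof simply defers to the proof of Theorem \ref{rate1}. You use the same ingredients: H\"older-based truncation errors with the $\sup$ sub/superadditivity, a one-sided comparison for $W=v-V$ in each direction, and the M-matrix maximum principle to carry the bound forward in time.

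Freezing $\Sigma$ at the maximizer of $v$ (upper bound) or of $V$ (lower bound) is an equivalent substitute for the paper's $\sup(f-g)\ge\sup f-\sup g$ step. The paper's $\sup$-type inequality would also have worked, since it can be linearized at $W$'s own maximizer. Your recursion $\|W^{n+1}\|_\infty\le\|W^n\|_\infty+\Delta t\,C(\Delta t^{\beta/2}+h^\beta)$ keeps the factor $\Delta t$ on the residual, and that factor is exactly what makes the constant proportional to $T$. The factor $N=T/\Delta t$ written in the proof of Theorem \ref{rate1} would not give a constant independent of $\Delta t$.
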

\begin{proof}Rate of convergence follows using the
same steps as in Theorem \ref{rate1} and Pei et al. \cite{Pei}.
\end{proof}
\section{Numerical Examples}
{\label{sec:6}}
\subsection{Butterfly Spread}
To assess the performance of the proposed numerical methods, we first consider a butterfly spread, a standard benchmark for nonlinear option pricing models due to its non-convex payoff. The payoff function is defined as:
\begin{equation}
\phi(S) = \max(S - K_1, 0) - 2\max(S - K_m, 0) + \max(S - K_2, 0),
\end{equation}
where $K_m = (K_1 + K_2)/2$. 
Through the logarithmic transformation $X = \ln S$, we solve the PDE \eqref{G-option1} with the transformed initial condition:
\begin{equation}
    \phi(X) = \max(e^X - K_1, 0) - 2\max(e^X - K_m, 0) + \max(e^X - K_2, 0).
\end{equation}
For this test, we adopt the parameters from the benchmark problem in Pooley et al. \cite{Po}: $T=0.25$, $r=0.1$, $K_1=90$, $K_2=110$, $\underline{\Sigma}=0.15$, and $\overline{\Sigma}=0.25$. Our objective is to compute the option value at the at-the-money point $S_0 = 100$, which corresponds to $X = \ln(100)$ in the transformed domain.

% \textcolor{red}{Recall that $\tau = T - t$, so that the payoff is the value of the option at expiry $t = T$, or the initial condition of the PDE at $\tau = 0$.}
%Figure \ref{fig:Initial_Buttlefly} provides a diagram of a sample payoff function.
% Our objective is to compute the option value at $S=100$.

\begin{comment}
\begin{figure}[htbp]
    \centering
    \includegraphics[width=0.4\textwidth]{Initial_Butterfly.eps}
    \caption{Sample payoff function for a butterfly spread.}
    \label{fig:Initial_Buttlefly}
\end{figure}
\end{comment}

\begin{table}[h]
    \centering
    \caption{Convergence results for butterfly spread using the explicit finite difference scheme.}

    \label{tab:butterfly_explicit}
    \resizebox{0.8\textwidth}{!}{
    \begin{tabular}{@{}ccccc@{}}
    \toprule
  Timesteps & Spatial nodes  & $L^\infty$ error & Conv. rate & CPU time (s) \\
    \midrule
    16    & 161    & $1.196$ & --   & 0.0087 \\
    64    & 321     & 3.858$\times 10^{-1}$ & 1.63 &  0.0118 \\
    256   & 641    & 1.031$\times 10^{-1}$ & 1.90 & 0.0146 \\
    1024  & 1281   & 2.499$\times 10^{-2}$ & 2.04 & 0.0354 \\
    % 4096 & 2561    & 7.575$\times 10^{-3}$  & 1.72   & 0.1375 \\
    \bottomrule
    \end{tabular}}
\end{table}
%前四行

%比较 1.隐式无条件稳定 不满足现实网格比  精度高
%比较不做log变换的数值结果 空间定 不同的网格约束（变时间网格）

To observe the convergence of our numerical methods, we compute numerical solutions on a sequence of progressively refined grids.
The coarsest grid uses 161 uniform spatial points. At each subsequent refinement level, the number of spatial points is doubled, and the time step is quartered to maintain a consistent relationship.
The convergence rate is computed using the formula
\begin{equation}
\text{Convergence rate} = \frac{\ln(E_{\text{coarse}}/E_{\text{fine}})}{\ln(r)},
\end{equation}
where $E_{\text{coarse}}$ and $E_{\text{fine}}$ are the $L^\infty$ errors on consecutive grid levels, and $r=2$ is the refinement ratio for spatial discretization.
A high-resolution numerical solution computed on an exceptionally fine grid serves as the reference ("exact") solution for error estimation.
Table \ref{tab:butterfly_explicit} reports the convergence results for the explicit scheme, while Table \ref{tab:butterfly_implicit} summarizes the results for the implicit scheme. For the implicit method, the nonlinear system at each time step is solved using the iterative scheme \eqref{iteration} with a convergence tolerance of $10^{-6}$.
%A reference \textquotedblleft exact" solution is token as the numerical solution on a sufficiently fine grid. For both the explicit and implicit schemes, the timesteps is 16384, while the number of spatial nodes is taken as 5121 and 20481, respectively. \textcolor{red}{参考值}  Table \ref{tab:butterfly_explicit} presents the convergence results for the explicit scheme, while Table \ref{tab:butterfly_implicit} shows the results for the implicit scheme. For the implicit method, the nonlinear system at each time step is solved using the iterative scheme \eqref{iteration} with a convergence tolerance of $10^{-6}$.
The numerical results in Table \ref{tab:butterfly_implicit} further show that the convergence of the implicit scheme does not require the mesh ratio condition $\overline{\Sigma}\sqrt{\Delta t}\leq h$.
%In terms of the "reference accuracy", the implicit method achieves substantially higher precision than the explicit one, primarily because a much finer spatial discretization is employed.
%The numerical results presented in Table \ref{tab:butterfly_implicit} demonstrate that the convergence of the implicit scheme does not require the condition  of the mesh ratio $\overline{\Sigma }\sqrt{% \Delta t}\leq h$, and furthermore, the implicit scheme exhibits higher accuracy compared to the explicit scheme\textcolor{red}{就说“参考的精度”（之所以加引号是因为两只数值格式都有各自的参考值）隐式高很多 然后说是因为空间剖分了更多}.
\begin{table}[h]
    \centering
    \caption{Convergence results for butterfly spread using the implicit finite difference scheme.}
    \label{tab:butterfly_implicit}
    \resizebox{0.8\textwidth}{!}{
    \begin{tabular}{@{}ccccc@{}}
    \toprule
Timesteps & Spatial nodes  & $L^\infty$ error & Conv. rate & CPU time (s) \\
    \midrule
    16    & 641     & 6.032$\times 10^{-2}$ & --   & 0.2961 \\
    64    & 1281   & 1.565$\times 10^{-2}$ & 1.95 & 0.3198 \\
    256   & 2561   & 4.269$\times 10^{-3}$ & 1.87 & 0.4643 \\
    1024  & 5121    & 8.308$\times 10^{-4}$ & 2.36 & 1.9760 \\
    % 16384 & 20481  & 4.881540 & --                    & --   & 107.9618 \\
    \bottomrule
    \end{tabular}}
\end{table}

The convergence results in Tables \ref{tab:butterfly_explicit} and \ref{tab:butterfly_implicit} demonstrate that both schemes achieve approximately second-order accuracy. To evaluate the option value at the target point $X_0 = 2\ln(10)$ (corresponding to $S_0 = 100$), which does not coincide with grid points, we employ quadratic Lagrange interpolation. Table \ref{tab:butterfly_convergence_comparison} presents a comprehensive comparison of the convergence behavior for both explicit and implicit schemes.

\begin{table}[H]
    \centering
    \caption{Option values at $X_0 = 2\ln(10)$ using quadratic Lagrange interpolation for butterfly spread option.}
    \label{tab:butterfly_convergence_comparison}
    \resizebox{0.85\textwidth}{!}{
    \begin{tabular}{@{}cccccccc@{}}
    \toprule
    \multicolumn{4}{c}{\textbf{Explicit Scheme}} & \multicolumn{4}{c}{\textbf{Implicit Scheme}} \\
    \midrule
    Steps & Nodes & Value & Difference & Steps & Nodes & Value & Difference \\
    \midrule
    16    & 161   & 5.757737 & $8.762 \times 10^{-1}$ & 16    & 641   & 4.895220 & $1.372 \times 10^{-2}$ \\
    64    & 321   & 4.906351 & $2.485 \times 10^{-2}$ & 64    & 1281  & 4.883320 & $1.738 \times 10^{-3}$ \\
    256   & 641   & 4.879003 & $2.579 \times 10^{-3}$ & 256   & 2561  & 4.882627 & $1.045 \times 10^{-3}$ \\
    1024  & 1281  & 4.883390 & $1.808 \times 10^{-3}$ & 1024  & 5121  & 4.881922 & $3.400 \times 10^{-4}$ \\
    \bottomrule
    \end{tabular}}
    \begin{tablenotes}
        \footnotesize
        \item Reference value: 4.881582, computed via implicit scheme with 16384 temporal and 20481 spatial grid points.
    \end{tablenotes}
\end{table}

Figure \ref{fig:iter_Buttlefly} illustrates the convergence behavior of the inner iteration scheme for the implicit method. It can be seen that the number of iterations per time step is almost always only 2, demonstrating the high efficiency
of the implicit numerical algorithm. 

% These results validate the effectiveness of our iterative approach for solving the nonlinear system at each time step.

\begin{figure}[H]
    \centering
\includegraphics[scale=0.5]{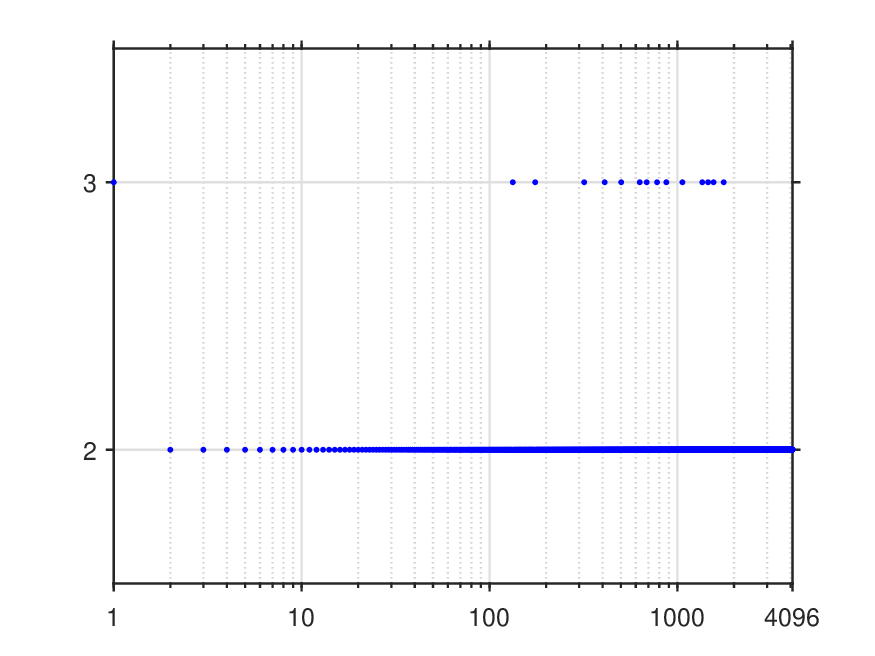}
\caption{The number of iterations within each time step.}
\label{fig:iter_Buttlefly}
\end{figure}

\subsection{Digital Call Option}
To further test the robustness of our schemes, we consider a digital call option, which presents a more significant challenge due to its discontinuous payoff at the strike price $K$:
%Digital options, also known as binary options, represent an extreme case of non-smoothness with a discontinuous terminal condition. 
\begin{equation}
U(T,S) =
\begin{cases}
1 & \text{if } S \geq K, \\
0 & \text{if } S < K,
\end{cases}
\end{equation}
The parameters for this test case are also adopted from Pooley et al. \cite{Po}, with $T=0.25$, $r=0.1$, $K=100$, $\underline{\Sigma}=0.15$, and $\overline{\Sigma}=0.25$.

\begin{comment}
\begin{figure}[htbp]
    \centering
    \includegraphics[width=0.4\textwidth]{initial_digital.eps}
    \caption{Sample payoff function for a digital call option.}
    \label{fig:initial_digital}
\end{figure}
\end{comment}
Through the logarithmic transformation $X = \ln S$, we solve the PDE \eqref{G-option1} with the transformed initial condition:
\begin{equation}
\phi(X) =
\begin{cases}
1 & \text{if } e^X \geq K, \\
0 & \text{if } e^X < K.
\end{cases}
\end{equation} 
%The objective is now to solve the solution $V(t=T,X=2\ln(10))$ of equation \eqref{G-option1}.
% For our numerical experiments, we set the parameters to $T=0.25$, $r=0.1$, $K=100$, $\underline{\Sigma}=0.15$, and $\overline{\Sigma}=0.25$.

\begin{table}[h]
    \centering
    \caption{Convergence results for the digital call option using the explicit finite difference scheme.}
    \label{tab:digital_explicit}
    \resizebox{0.8\textwidth}{!}{
    \begin{tabular}{@{}ccccc@{}}
    \toprule
    Timesteps & Spatial nodes  & $L^\infty$ error & Conv. rate & CPU time (s) \\
    \midrule
    64    & 961     & 4.142$\times 10^{-2}$ & --   & 0.0093 \\
    256   & 1921    & 1.989$\times 10^{-2}$ & 1.06 & 0.0210 \\
    1024  & 3841    & 9.209$\times 10^{-3}$ & 1.11 & 0.0599 \\
    4096  & 7681    & 3.908$\times 10^{-3}$ & 1.24 & 0.3246 \\
    % 16384 & 15361  & 0.691276 & --                    & --   & 2.4305 \\
    \bottomrule
    \end{tabular}}
\end{table}

\begin{table}[h]
    \centering
    \caption{Convergence results for the digital call option using the implicit finite difference scheme.}
    \label{tab:digital_implicit}
    \resizebox{0.8\textwidth}{!}{
    \begin{tabular}{@{}cccccc@{}}
    \toprule
    Timesteps & Spatial nodes & $L^\infty$ error & Conv. rate & CPU time (s) \\
    \midrule
    64    & 1281    & 1.195$\times 10^{-2}$ & --   & 0.0468 \\
    256   & 2561    & 2.238$\times 10^{-3}$ & 0.62 & 0.2089 \\
    1024  & 5121    & 5.483$\times 10^{-3}$ & 0.73 & 1.6998 \\
    4096  & 10241   & 1.836$\times 10^{-3}$ & 1.19 & 14.1552 \\
    % 16384 & 20481  & 0.690660 & --                    & --   & 109.4069 \\
    \bottomrule
    \end{tabular}}
\end{table}

%Tables \ref{tab:digital_explicit} and \ref{tab:digital_implicit} present the convergence results for the explicit and implicit schemes, respectively. As expected, the discontinuity in the payoff function poses significant challenges for both numerical methods. The convergence rates for both schemes are noticeably lower than those observed for the butterfly spread option.

Tables \ref{tab:digital_explicit} and \ref{tab:digital_implicit} present the convergence results for the explicit and implicit schemes, respectively. 
As anticipated, the discontinuity in the payoff function degrades the performance of both numerical schemes. 
The lack of regularity in the terminal condition reduces the global smoothness of the PDE's viscosity solution. Consequently, as shown in Tables \ref{tab:digital_explicit} and \ref{tab:digital_implicit}, the empirical convergence rates are markedly lower than those for the butterfly spread, approaching first-order accuracy.
To evaluate the option value at the target point $X_0 = 2\ln(10)$ (corresponding to $S_0 = 100$), we again employ quadratic Lagrange interpolation. A detailed comparison of the convergence behavior for both schemes is presented in Table \ref{tab:digital_convergence_comparison}.

\begin{table}[H]
    \centering
    \caption{Option values at $X_0 = 2\ln(10)$ using quadratic Lagrange interpolation for digital call option.}
    \label{tab:digital_convergence_comparison}
    \resizebox{0.85\textwidth}{!}{
    \begin{tabular}{@{}cccccccc@{}}
    \toprule
    \multicolumn{4}{c}{\textbf{Explicit Scheme}} & \multicolumn{4}{c}{\textbf{Implicit Scheme}} \\
    \midrule
    Steps & Nodes & Value & Difference & Steps & Nodes & Value & Difference \\
    \midrule
    64    & 961   & 0.652812 & $3.785 \times 10^{-2}$ & 64    & 1281  & 0.703109 & $1.245 \times 10^{-2}$ \\
    256   & 1921  & 0.673527 & $1.714 \times 10^{-2}$ & 256   & 2561  & 0.688523 & $2.139 \times 10^{-3}$ \\
    1024  & 3841  & 0.683729 & $6.933 \times 10^{-3}$ & 1024  & 5121  & 0.696152 & $5.490 \times 10^{-3}$ \\
    4096  & 7681  & 0.688773 & $1.889 \times 10^{-3}$ & 4096  & 10241 & 0.692500 & $1.838 \times 10^{-3}$ \\
    \bottomrule
    \end{tabular}}
    \begin{tablenotes}
        \footnotesize
        \item Reference value: 0.690662, computed via implicit scheme with 16384 temporal and 20481 spatial grid points.
    \end{tablenotes}
\end{table}

Figure \ref{fig:iter_digital} illustrates the convergence behavior of the inner iteration scheme for the implicit method, showing similar efficiency to the butterfly spread case with consistently low iteration counts per time step.

\begin{figure}[H]
    \centering
\includegraphics[scale=0.5]{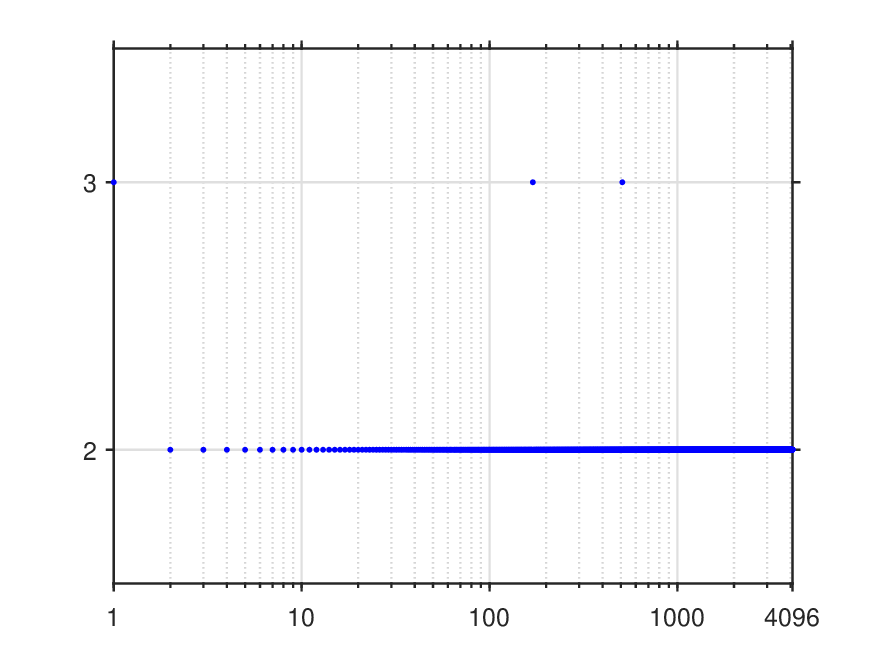}
\caption{The number of iterations within each time step.}
\label{fig:iter_digital}
\end{figure}

% These results validate the effectiveness of our iterative approach for solving the nonlinear system at each time step.

% \begin{figure}[H]
%     \centering
% \includegraphics[scale=0.45{fig/iter_digital.eps}
% \caption{The number of iterations within each time step.}
% \label{fig:iter_digital}
% \end{figure}

\subsection{Benefits of Logarithmic Transformation}

As detailed in Remark \ref{remark}, a key motivation for the logarithmic transformation $X = \ln S$ is to relax the restrictive stability constraint of the explicit finite difference scheme.
To quantify this theoretical advantage, we conduct a direct numerical comparison between solving the G-option equation in the original $S$-domain and the transformed $X$-domain.

\paragraph{Experimental Design:} To ensure a fair comparison, both formulations are solved using the same number of spatial grid points ($M=200, 400, 800$). The computational domain for the original formulation is set to $[S_{\min}, S_{\max}] = [50, 150]$, corresponding to $[X_{\min}, X_{\max}] = [\ln(50), \ln(150)]$ for the log-transformed problem. For each value of $M$, we determine the minimum number of time steps required by the stability condition for the explicit scheme in each domain.

Tables \ref{tab:butterfly_option_comparison} and \ref{tab:digital_option_comparison} present the numerical results for both schemes at different levels of grid refinement. The results provide quantitative evidence of the improvement achieved through the logarithmic transformation. The grid
ratio (for time discretization) of the numerical scheme with logarithmic transformation is approximately 2.7155 times lower than that without the transformation.

\begin{table}[H]
    \centering
    \caption{Comparison of numerical results for the butterfly spread option
    with and without logarithmic transformation.
    Reference value $V_\text{ref}=4.881540$ obtained by the implicit scheme
    with $16384$ temporal and $20481$ spatial steps (logarithmic transform).}
    \label{tab:butterfly_option_comparison}
    \resizebox{\textwidth}{!}{
    \begin{tabular}{@{}ccccccc@{}}
    \toprule
    & \multicolumn{2}{c}{$M=200$} & \multicolumn{2}{c}{$M=400$} & \multicolumn{2}{c}{$M=800$} \\
    \cmidrule(lr){2-3} \cmidrule(lr){4-5} \cmidrule(lr){6-7}
    & Without log & With log & Without log & With log & Without log & With log \\
    \midrule
    Spatial step size    & $5.00\times 10^{-1}$ & $5.49\times 10^{-3}$
                         & $2.50\times 10^{-1}$ & $2.75\times 10^{-3}$
                         & $1.25\times 10^{-1}$ & $1.37\times 10^{-3}$ \\
    Minimum time step    & $1.41\times 10^{3}$  & $5.18\times 10^{2}$
                         & $5.63\times 10^{3}$  & $2.07\times 10^{3}$
                         & $2.25\times 10^{4}$  & $8.29\times 10^{3}$ \\
    Numerical solution   & $4.88397$            & $4.88094$
                         & $4.88215$            & $4.88127$
                         & $4.88169$            & $4.88142$ \\
    Relative error       & $4.97\times 10^{-4}$ & $1.23\times 10^{-4}$
                         & $1.24\times 10^{-4}$ & $5.47\times 10^{-5}$
                         & $3.03\times 10^{-5}$ & $2.42\times 10^{-5}$ \\
    CPU time (s)         & $1.80\times 10^{-2}$ & $1.12\times 10^{-2}$
                         & $4.77\times 10^{-2}$ & $3.34\times 10^{-2}$
                         & $2.06\times 10^{-1}$ & $8.80\times 10^{-2}$ \\
    \bottomrule
    \end{tabular}
    }
\end{table}

\begin{table}[H]
    \centering
    \caption{Comparison of numerical results for the digital option with and without logarithmic
    transformation.
    Reference value $V_\text{ref}=0.690660$ obtained by the implicit scheme
    with logarithmic transformation using $16384$ temporal and $20481$
    spatial steps.}
    \label{tab:digital_option_comparison}
    \resizebox{\textwidth}{!}{
    \begin{tabular}{@{}ccccccc@{}}
    \toprule
    & \multicolumn{2}{c}{$M=200$} & \multicolumn{2}{c}{$M=400$} & \multicolumn{2}{c}{$M=800$} \\
    \cmidrule(lr){2-3} \cmidrule(lr){4-5} \cmidrule(lr){6-7}
    & Without log & With log & Without log & With log & Without log & With log \\
    \midrule
    Spatial step size    & $5.00\times 10^{-1}$ & $5.49\times 10^{-3}$
                         & $2.50\times 10^{-1}$ & $2.75\times 10^{-3}$
                         & $1.25\times 10^{-1}$ & $1.37\times 10^{-3}$ \\
    Minimum time step    & $1.41\times 10^{3}$  & $5.18\times 10^{2}$
                         & $5.63\times 10^{3}$  & $2.07\times 10^{3}$
                         & $2.25\times 10^{4}$  & $8.29\times 10^{3}$ \\
    Numerical solution   & $6.8029\times 10^{-1}$ & $6.8470\times 10^{-1}$
                         & $6.8514\times 10^{-1}$ & $6.8928\times 10^{-1}$
                         & $6.8755\times 10^{-1}$ & $6.9156\times 10^{-1}$ \\
    Relative error       & $1.50\times 10^{-2}$ & $8.63\times 10^{-3}$
                         & $7.99\times 10^{-3}$ & $2.00\times 10^{-3}$
                         & $4.50\times 10^{-3}$ & $1.36\times 10^{-3}$ \\
    CPU time (s)         & $3.86\times 10^{-2}$ & $1.19\times 10^{-2}$
                         & $7.08\times 10^{-2}$ & $3.17\times 10^{-2}$
                         & $2.03\times 10^{-1}$ & $8.59\times 10^{-2}$ \\
    \bottomrule
    \end{tabular}
    }
\end{table}

The numerical investigations underscore several critical advantages of employing the logarithmic transformation in the context of explicit difference schemes for G-option pricing equations:

\begin{compactitem}
\item \textbf{Relaxed Stability Constraint:} The transformation significantly loosens the stability condition inherent in explicit schemes. For instance, with $M=800$ spatial points, the minimum number of time steps required in the original $S$-domain is $22500$, whereas the transformed $X$-domain requires only $8286$—a reduction of approximately $63\%$.
\item \textbf{Improved Computational Efficiency:} Consequently, the log-transformed approach achieves comparable or superior accuracy at a substantially lower computational cost. As shown in the CPU time comparisons, the efficiency gains become more pronounced as the grid is refined.
\end{compactitem}

These results collectively establish the logarithmic transformation as a highly practical and theoretically justified preprocessing technique, which can be used for robust and efficient numerical processing of the G-option equation.

\section{Conclusions}
\label{sec:7} We have established the risk-neutral pricing problem of options under the G-expectation and developed explicit and implicit discretization schemes respectively to solve it. First, we describe the price of the underlying asset through the G-geometric Brownian motion, propose a risk-neutral pricing model for options under the G-expectation, and then derive the G-Black-Scholes equation. To improve computational efficiency and reduce numerical
cost, we introduce a logarithmic transformation of the asset price and obtain an alternative nonlinear PDE. Then, we demonstrate the consistency, stability, and monotonicity of the two numerical schemes respectively. In particular, for the implicit scheme, we prove the monotonic convergence of the nonlinear inner iteration at each time step. In addition, we provide an estimate of the convergence order. Finally, the numerical examples we present further confirm the accuracy and computational advantages of the proposed methods.

\section*{Acknowledgments}
This work of Yue was partially supported by the NSFC Grant 12371401. This work of Zheng was partially supported by the Postgraduate Research \& Practice Innovation Program of Jiangsu Province (Grant No. KYCX25 3502).

\end{document}